\tikzset{
  schraffiert/.style={pattern=horizontal lines,pattern color=#1},
  schraffiert/.default=black
}
\newcommand{\eps}{\epsilon}
\newcommand{\dist}{\operatorname{dist}}
\newtheorem{theorem}{Theorem}[section]
\newtheorem{claim}[theorem]{Claim}
\theoremstyle{definition}
\definecolor{orange}{rgb}{1,0.5,0}
\numberwithin{equation}{section}
\numberwithin{theorem}{section}
\begin{document}
\setlength{\parskip}{8pt plus 0.5ex minus 0.2ex}
\renewcommand{\baselinestretch}{1.1}

\title{Optimal strategies for patrolling fences\footnote{This work started and was majorly produced thanks to Angelika Steger's 2018 ETHZ Buchboden Workshop on Combinatorial Structures and Algorithms}}

\author{Bernhard Haeupler\footnote{\texttt{haeupler@cs.cmu.edu}; Part of this work was done in the summer of 2018 as a visiting professor at ETH Z\"urich supported by Mohsen Ghaffari. Supported also in part by NSF grants CCF-1618280, CCF-1814603, CCF-1527110 and NSF CAREER award CCF-1750808.}\\Carnegie Mellon University\\
\and
Fabian Kuhn\footnote{\texttt{kuhn@cs.uni-freiburg.de}; Supported in part by ERC Grant No. 336495 (ACDC).}\\University of Freiburg\\
\and
Anders Martinsson\footnote{\texttt{maanders@inf.ethz.ch}}
, Kalina Petrova\footnote{\texttt{kpetrova@student.ethz.ch}; Supported in part by Excellence Scholarship \& Opportunity Programme of ETH Z\"urich.}, and Pascal Pfister\footnote{\texttt{ppfister@inf.ethz.ch}}\\ETH Z\"urich
}

\date{}

\maketitle

\begin{abstract}
A classical multi-agent fence patrolling problem asks: What is the maximum length $L$ of a line fence that $k$ agents with maximum speeds $v_1,\ldots, v_k$ can patrol if each point on the line needs to be visited at least once every unit of time.
It is easy to see that $L = \alpha \sum_{i=1}^k v_i$ for some efficiency $\alpha \in [\frac{1}{2},1)$. After a series of works~\cite{czyzowicz2011boundary,dumitrescu2014fence,kawamura2015fence,kawamura2015simple} giving better and better efficiencies, it was conjectured by Kawamura and Soejima \cite{kawamura2015simple} that the best possible efficiency approaches $\frac{2}{3}$. No upper bounds on the efficiency below $1$ were known.

We prove the first such upper bounds and tightly bound the optimal efficiency in terms of the minimum speed ratio $s = \frac{v_{\max}}{v_{\min}}$ and the number of agents $k$. Our bounds of $\alpha \leq \frac{1}{1 + \frac{1}{s}}$ and $\alpha \leq 1 - \frac{1}{\sqrt{k}+1}$ imply that in order to achieve efficiency $1 - \eps$, at least $k \geq \Omega(\eps^{-2})$ agents with a speed ratio of $s \geq \Omega(\eps^{-1})$ are necessary. 
Guided by our upper bounds, we construct a scheme whose efficiency approaches $1$, disproving the conjecture stated above. Our scheme asymptotically matches our upper bounds in terms of the maximal speed difference and the number of agents used.

A variation of the fence patrolling problem considers a circular fence instead and asks for its circumference to be maximized. We consider the unidirectional case of this variation, where all agents are only allowed to move in one direction, say clockwise. At first, a strategy yielding $L = \max_{r \in [k]} r \cdot v_r$ where $v_1 \geq v_2 \geq \dots \geq v_k$ was conjectured to be optimal by Czyzowicz et al.~\cite{czyzowicz2011boundary} This was proven not to be the case by giving constructions for only specific numbers of agents with marginal improvements of $L$. We give a general construction that yields $L = \frac{1}{33 \log_e\log_2(k)} \sum_{i=1}^k v_i$ for any set of agents, which in particular for the case $1, 1/2, \dots, 1/k$ diverges as $k \rightarrow \infty$, thus resolving a conjecture by Kawamura and Soejima~\cite{kawamura2015simple} affirmatively.
\end{abstract}

\section{Introduction}

Patrolling is a fundamental task in robotics, multi-agent systems, and security settings. Given some environment of interest, and a collection of mobile agents, the aim is to coordinate the movements of the agents in order to, for example, guard an area from intrusion by an enemy, prevent accidents or failure of equipment, maintain up-to-date information of the environment, etc. For each of these tasks, ensuring that certain points in the environment get visited/monitored frequently is crucial. Performance of patrolling algorithms is consequently often measured in terms of \emph{idleness} -- roughly speaking, the time between two consecutive visits to a point in the environment.

Multi-agent patrolling has been extensively studied in the robotics literature since the early 2000s, e.g., see \cite{chevaleyre2004theoretical,machado2003empirical} and the survey \cite{portugal2011survey}. However, even for extremely clean and very simple models, determining optimal patrolling schemes poses many natural mathematical questions with interesting and surprisingly sophisticated answers \cite{collins2013, czyzowicz2011boundary,  czyzowicz2017faulty,  czyzowicz2017twospeed, czyzowicz2017trees, czyzowicz2014visibility,  dumitrescu2014fence, kawamura2015fence, kawamura2015simple}. 

\subsection{Fence Patrolling} 
This paper studies a classical fence patrolling problem introduced by Czyzowicz et al.~\cite{czyzowicz2011boundary}, which might be one of the cleanest and most natural patrolling problems: What is the maximum length $L$ of a fence that $k$ agents $a_1, \dots, a_k$ with maximum speeds $v_1, \ldots, v_k$ can patrol if each point needs to be visited at least once every unit of time. 
Czyzowicz et al. introduce two variations of this question -- the fence could be either an open curve, or a closed curve. For simplicity, we assume the open curve is a line segment and the closed curve is a circle.

For the line segment, it is easy to see that for any speeds the maximum length $L$ satisfies $L = \alpha \sum_{i=1}^k v_i$ for some efficiency $\alpha \in [\frac{1}{2},1)$. In particular, in one unit of time an agent $a_i$ can cover a length of at most $v_i$ and all agents can cover at most a total length of $\sum_{i=1}^k v_i$.  An efficiency of exactly $\alpha = 1$ is furthermore never possible because agents have to turn around eventually. On the other hand, an efficiency of $\alpha=\frac{1}{2}$ can easily be achieved by the following strategy:\\~\\
\textsc{Partition-based strategy, $\mathcal{A}_1$:} For all $i \in [k],$ agent $a_i$  patrols a subsegment of length $\frac{1}{2} v_i$ by going back and forth on this segment once every unit of time. This patrols a segment of length $L=\frac{1}{2}\sum_{i=1}^k v_i$ with idle time $1$.

Considering patrol schedules on a circle, the picture is quite different than for a line segment. Again, the length $L$ of any circle that can be patrolled by a set of agents is upper-bounded by the sum of the maximum speeds of the agents, since agent $a_i$ cannot cover a length of more than $v_i$. Here, however, it is easy to find collections of agents and a corresponding patrol schedule that achieves this exactly -- imagine $k$ identical agents starting equidistantly along the circle and moving in unison in the same direction, say counter-clockwise. 

\subsection{Prior Work on Fence Patrolling} 

\subsubsection{Prior Work on the Line Segment}

Czyzowicz et al.~\cite{czyzowicz2011boundary} observed that the trivial scheme $\mathcal{A}_1$ with efficiency $\frac{1}{2}$ is optimal if the paths of the agents never cross. To see this, note that the leftmost agent $a_i$ cannot walk away further than $\frac{1}{2} v_i$ from the leftmost point of the fence as it would take more than one unit of time between two visits of this point. By the same argument the agent $a_j$ to the right of agent $a_i$ cannot ever be further away than $\frac{1}{2} (v_i + v_j)$ from the leftmost point of the fence and induction shows that a total fence length of $\frac{1}{2}\sum_{i=1}^k v_i$ is best possible. For the special case of all agents having the same speed the assumption that the paths of the agents never cross is furthermore without loss of generality as one can equally well switch identities of agents at a crossing, making the agents bounce off each other instead of crossing. In the worst case an efficiency of $\frac{1}{2}$ is thus optimal and Czyzowicz et al. posited~\cite{czyzowicz2011boundary} that indeed no better efficiency can be achieved for any speeds. 

Surprisingly, Kawamura and Kobayashi~\cite{kawamura2015fence} disproved this by providing an explicit fence patrolling schedule for $6$ agents with speeds $1, 1, 1, 1, \frac{7}{3},$ and $\frac{1}{2}$ for a fence of length $\frac{7}{2}$, thus achieving an efficiency of $\frac{21}{41}>\frac{1}{2}$. This was improved by Dumitrescu, Ghosh and T\'oth \cite{dumitrescu2014fence}, who proposed a family of patrolling schedules with efficiency approaching $\frac{25}{48}$, and finally by Kawamura and Soejima \cite{kawamura2015simple} who achieved an efficiency approaching $\frac{2}{3}$. Kawamura and Soejima furthermore explicitly conjectured that no efficiency better than $\frac{2}{3}$ is possible for any set of speeds~\cite[Conjecture 6, page 9]{kawamura2015simple}. 

On the other hand, except for the setting of equal speeds discussed above, no upper bounds on the efficiency below $1$ have been provided in the literature~\cite{czyzowicz2011boundary,kawamura2015fence,dumitrescu2014fence,kawamura2015simple}.

\subsubsection{Prior work on the Circle}

For a general set of agents, Czyzowicz et al. \cite{czyzowicz2011boundary} proposed the following universal scheme that generalizes the aforementioned schedule for equal speeds:\\~\\
\textsc{Runners strategy, $\mathcal{A}_2$:} Assume $v_1\geq v_2 \geq \dots \geq v_k$. Find the $r\in[k] $ that maximizes $r\cdot v_r$, and let the $r$ fastest agents move equidistantly along the circle at speed $v_r$. This patrols a circle of length $L=\max_{r\in[k]} r\cdot v_r$ with idle time $1$.

Suppose for a collection of agents with maximum speeds $v_1\geq v_2 \geq \dots \geq v_k$, $\mathcal{A}_2$ produces a schedule on a circle with length $L$. Without loss of generality, we can assume $L=1$. Then $\max_{r\in[k]} r\cdot v_r = 1$, and by possibly increasing the maximum speed of some agents we may assume $v_i = 1/i$ for each $i\in[k]$. Note that increasing speeds in this way can only increase the maximum circumference that can be patrolled with idle time $1$ using these agents, but will not increase the length produced by $\mathcal{A}_2$. Thus, if there is any collection of agents where there is a patrol schedule that performs better than $\mathcal{A}_2$, there must be such a schedule in the case of \emph{harmonic} maximum speeds $1, 1/2, \dots, 1/k$.

To analyse the performance of patrol schemes on the circle, Czyzowicz et al. considered two different cases: \emph{unidirectional} patrol schedules, where agents are only allowed to move in one direction, and general (or \emph{bidirectional}) patrol schedules, where agents are allowed to go in both directions. Clearly, any patrol schedule obtained through $\mathcal{A}_2$ is unidirectional.

In the bidirectional case, it is not too hard to see that there are situations where $\mathcal{A}_2$ is not optimal. Indeed, in the case of harmonic maxiumum speeds, the partition-based strategy $\mathcal{A}_1$, which works in the same way for a circular fence as for a line segment fence, would give $L=(1+1/2+\dots+1/k)/2$, which is bigger than $1$ as given by $\mathcal{A}_2$ for any $k \geq 4$. In fact, an example with three agents was given in \cite{czyzowicz2011boundary} where neither $\mathcal{A}_1$ nor $\mathcal{A}_2$ are optimal. This was strengthened further by Dumitrescu et al. \cite{dumitrescu2014fence}, who showed for any $k\geq 4$ there exists a collection of $k$ agents where what they call the \emph{train strategy} $\mathcal{A}_3$ performs strictly better than both $\mathcal{A}_1$ and $\mathcal{A}_2$. 
To the authors' knowledge, no universal scheme has been proposed to always produce an optimal patrol schedule in this setting.

For the unidirectional case, it was initially conjectured by Czyzowicz et al. that $\mathcal{A}_2$ is optimal for any set of agents. This was proved to be true for up to four agents. However, it was shown incorrect by parallel results by Dumitrescu et al. \cite{dumitrescu2014fence} and Kawamura and Soejima \cite{kawamura2015simple}, who gave explicit examples of patrol schemes for $32$ and $122$ agents (with harmonic speeds) with $L=1+\epsilon$ (for a small unspecified $\epsilon>0$) and $L=1.05$ respectively. Kawamura and Soejima further conjectured that the maximum length of a unidirectional circle that can be patrolled by agents with speeds $1, 1/2, \dots 1/k$ diverges as $k\rightarrow\infty$.

\section{Our Results}

This paper advances the understanding of the fence patrolling problem by giving tight upper and lower bounds on the optimal efficiency for the line segment, and a construction for the circle with efficiency of $\Theta(\frac{1}{\log_e\log_2{k}})$ for any set of $k$ agents. To a large extent it concludes the main line of inquiry put forward in the works discussed above~\cite{czyzowicz2011boundary,kawamura2015fence,dumitrescu2014fence,kawamura2015simple}.

\subsection{Results for the Line Segment}

We provide the first technique to prove general impossibility results for the fence patrolling problem. We explain our ideas in more detail in Section~\ref{sec:impossiblity} and merely state our main upper bound here:

\begin{restatable}{theorem}{upperBoundFence}
\label{thm:GeneralUpperBoundFence}
Any fence patrol schedule with $k$ agents with maximum speeds $v_1, \dots, v_k$ patrols a fence of length at most 
$$L \leq \sum_{i=1}^{k} \frac{v_i}{1 + \frac{v_i}{\max_j v_j}}.$$
\end{restatable}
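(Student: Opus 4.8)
The plan is to pass to the continuous formulation and argue by an amortized ``coverage rate'' accounting over a long time horizon. Encode each agent by a trajectory $x_i \colon \R \to [0,L]$ with $|\dot{x}_i| \le v_i$, and write $V = \max_j v_j$. For a fixed time $t$ let $C_i(t)$ denote the set of fence points visited by $a_i$ during the window $[t,t+1]$; since $x_i$ is continuous with speed at most $v_i$, the set $C_i(t)$ is an interval of length at most $v_i$, and the idle-time-$1$ condition says exactly that $\bigcup_i C_i(t) = [0,L]$ for every $t$. Summing lengths gives only the trivial bound $L \le \sum_i v_i$, and I would first record why this is lossy: it ignores that, to keep covering a finite fence, agents must turn around and re-cover recently visited points, and the crude estimate $|C_i(t)| \le v_i$ charges such wasted re-coverage at full value.

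To capture the waste I would introduce the age field $\tau(y,t) = t - \ell(y,t) \in [0,1]$, where $\ell(y,t)$ is the last time at or before $t$ that $y$ was visited; the patrolling condition is precisely $\tau \le 1$ everywhere. Over $[0,T]$ the age grows at rate $1$ at (almost) every point and is reset to $0$ wherever an agent currently sits, so the telescoping identity $\int_0^L[\tau(y,T)-\tau(y,0)]\,dy = \int_0^T \big(\text{growth}-\text{removal}\big)\,dt$, together with $0\le\tau\le 1$, forces the time-averaged total removal rate to equal the growth rate $L$ up to an $O(L/T)$ error. The removal contributed by $a_i$ at time $t$ is $|\dot{x}_i(t)|\,\tau\big(x_i(t),t^-\big)$, the rate at which it sweeps ground times the age it erases there. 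Hence $L \le \sum_i R_i$, where $R_i$ is the long-run time average of $|\dot{x}_i|\,\tau(x_i,\cdot)$, and the whole theorem reduces to the per-agent estimate
\[
R_i \;\le\; \frac{v_i}{1+\frac{v_i}{V}} \;=\; \frac{v_i V}{V+v_i}.
\]

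The heart of the argument, and the step I expect to be the main obstacle, is this per-agent bound, because it is exactly here that the fastest speed $V$ must enter. The point is that $a_i$ cannot simultaneously move at full speed $v_i$ and always erase near-maximal age: whenever it advances over fresh ground it lowers the age behind it, and the only way its trailing region can regain large age fast enough to be erased again at a rate close to $v_i$ is for that region to have been left un-refreshed — but coverage of that region by the swarm can only be maintained if its ``refresh frontier'' keeps advancing, and no frontier can advance faster than the quickest agent, i.e.\ at speed $V$. I would make this precise by an exchange/extremal argument: fix all other trajectories and show that the schedule maximizing $R_i$ is a periodic relay in which $a_i$ and an effective speed-$V$ helper alternate over a common stretch, whose per-length time cost is $\tfrac{1}{v_i}+\tfrac{1}{V}$ and therefore yields average productive rate exactly $\frac{v_i V}{V+v_i}$. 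The delicate points will be justifying the reduction to such a canonical periodic schedule (ruling out that clever non-periodic or overlapping motions beat it), handling the measure-zero reset set and the turning instants rigorously in the age calculus, and confirming that summing the independent per-agent bounds does not double-count removal — i.e.\ that attributing each reset to the agent performing it is consistent with the global balance that yields $L \le \sum_i R_i$.
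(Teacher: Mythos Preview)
Your removal-rate framework is sound through the reduction $L \le \sum_i R_i$, but there is a real gap at the per-agent inequality $R_i \le \frac{v_i V}{V+v_i}$, which you correctly flag as the crux and then leave as an unproven ``exchange/extremal'' reduction. The difficulty is deeper than a matter of rigor: this inequality is \emph{false} on the circle. With $k$ agents of equal speed $V$ moving in lockstep around a circle of circumference $kV$, each always erases age exactly $1$, so $R_i = V > V/2$. Hence any proof of $R_i \le \frac{v_i V}{V+v_i}$ must exploit that on a segment agent $i$ is forced to turn around, and your heuristic about trailing regions and refresh frontiers does not make that mechanism precise. Nor is ``fix the other trajectories and optimize $a_i$'' clearly the right reduction, since $\tau$---and hence $R_i$---depends on all trajectories, which are jointly constrained by the idle-time condition; you give no reason the extremal configuration for $R_i$ must be a two-agent relay rather than something involving many helpers.

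The paper avoids this entirely with a different, pointwise argument. Working in the spacetime strip $[0,L]\times[1,2k]$, it adds agents one at a time in an \emph{adaptively chosen} order and tracks the ``right border'' $B^{A}(t)=\max\{x:(x,t)\text{ is covered by }A\}$. At each step it locates the leftmost point $(x_j,t_j)$ of the current border over a growing time window, chooses as the next agent one that covers immediately to the right of that point, and shows by a short geometric case analysis---intersecting the new agent's cone of slope $1/v_{i_{j+1}}$ with the cone of slope $1/v_{\max}$ bounding the old border---that the border's minimum can advance by at most $1/(1/v_{i_{j+1}}+1/v_{\max})$. Summing these advancements from $0$ to $L$ gives the theorem. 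The segment structure enters automatically (a left endpoint to start from, a right endpoint to reach), and the per-agent estimate is a local geometric fact about a single spacetime point rather than a long-run time average, which is what makes it tractable.
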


One way to interpret Theorem \ref{thm:GeneralUpperBoundFence} is that the contribution of an agent $a_i$ depends not only on his/her own speed $v_i$ but also on how much slower he/she is than the fastest agent. In particular, instead of always contributing $v_i$, as in the trivial upper bound, an agent contributes at most $\frac{1}{1 + \frac{1}{s_i}} \cdot v_i$ given that the fastest agent patrolling is a factor of $s_i$ faster than $a_i$. That is, the ``relative efficiency'' of an agent $a_i$ ranges anywhere between $1/2$ and $1$ depending on $s_i$, which always constitutes an improvement over the trivial upper bound of $\sum_i v_i$.

We also show that Theorem \ref{thm:GeneralUpperBoundFence} can be used to prove an upper bound on the efficiency of a schedule solely in terms of the number of agents:

\begin{restatable}{lemma}{secondCor}
\label{lem:UpperBoundNumber}
Any fence patrolling schedule with $k$ agents has an efficiency of at most $1 - \frac{1}{\sqrt{k} + 1}$.
\end{restatable}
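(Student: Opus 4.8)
The plan is to derive the lemma directly from Theorem~\ref{thm:GeneralUpperBoundFence} by turning its upper bound on $L$ into a lower bound on $1-\alpha$ and then optimizing over the speeds. Since efficiency is scale invariant, I would first normalize $v_{\max}=1$, so that every $v_i\in(0,1]$ and at least one $v_i$ equals $1$. Writing $S=\sum_i v_i$, the theorem gives $\alpha \le \frac{1}{S}\sum_i \frac{v_i}{1+v_i}$. The crucial algebraic step is the identity $\frac{v_i}{1+v_i}=v_i-\frac{v_i^2}{1+v_i}$, which rewrites the bound as $\alpha \le 1 - \frac{1}{S}\sum_i \frac{v_i^2}{1+v_i}$. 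Thus it suffices to prove the clean inequality $\frac{\sum_i v_i^2/(1+v_i)}{\sum_i v_i}\ge \frac{1}{\sqrt{k}+1}$ for every admissible speed profile.

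To lower bound the left-hand side, I would exploit two facts: that $v_i\le 1$ forces $\frac{v_i^2}{1+v_i}\ge \frac{v_i^2}{2}$, and that some agent, say $a_1$, attains $v_1=1$ and hence contributes exactly $\tfrac12$ to the numerator. Peeling off that agent and applying Cauchy--Schwarz (equivalently QM--AM) to the remaining $k-1$ terms, with $s:=\sum_{i\ge 2}v_i=S-1$, yields $\sum_i \frac{v_i^2}{1+v_i}\ge \tfrac12\bigl(1+\tfrac{s^2}{k-1}\bigr)$ while $S=1+s$. The whole problem then collapses to the single-variable estimate $\frac{(k-1)+s^2}{2(k-1)(1+s)}\ge\frac{1}{\sqrt{k}+1}$ for $s\ge 0$.

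Finally I would carry out this one-dimensional minimization: differentiating shows the minimum is attained at $s=\sqrt{k}-1$ (which lies in the feasible range $[0,k-1]$), and plugging in gives exactly $\frac{\sqrt{k}-1}{k-1}=\frac{1}{\sqrt{k}+1}$. Combining with the reformulation above then gives $\alpha\le 1-\frac{1}{\sqrt{k}+1}$, as claimed.

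The main obstacle, as I see it, is not any single computation but finding the right packaging: recognizing that the identity $\frac{v_i}{1+v_i}=v_i-\frac{v_i^2}{1+v_i}$ converts Theorem~\ref{thm:GeneralUpperBoundFence} into a statement about $1-\alpha$, and then realizing that the deliberately crude bound $\frac{v_i^2}{1+v_i}\ge \frac{v_i^2}{2}$ together with Cauchy--Schwarz is already tight enough to produce the exact constant $\frac{1}{\sqrt{k}+1}$. This amounts to anticipating that the extremal profile consists of one agent at full speed and $k-1$ equally slow agents; once that structure is guessed, the remaining work is routine. I would also double-check that the normalization and the ``peel off one maximal agent'' step lose nothing, since the fastest agent's contribution $\frac{1}{1+1}$ to the numerator matches the $\frac{v_i^2}{2}$ bound with equality at $v_i=1$.
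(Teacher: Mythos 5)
Your proof is correct, and it takes a genuinely different route from the paper's. The paper also derives Lemma~\ref{lem:UpperBoundNumber} from Theorem~\ref{thm:GeneralUpperBoundFence}, but via a case distinction on $v_{\max}$: if $v_{\max}\geq \frac{1}{\sqrt{k}}\sum_i v_i$ it invokes the crude bound $L\leq \sum_i v_i-\frac{v_{\max}}{2}$ (charging every non-maximal agent its full speed), and otherwise it applies Jensen's inequality to the concave function $x\mapsto \frac{1}{1/x+1/v_{\max}}$ to get $L\leq \frac{1}{1+\sum_i v_i/(kv_{\max})}\sum_i v_i$. Your argument replaces the case split by the identity $\frac{v_i}{1+v_i}=v_i-\frac{v_i^2}{1+v_i}$, the relaxation $\frac{v_i^2}{1+v_i}\geq\frac{v_i^2}{2}$, Cauchy--Schwarz (which is the same convexity fact as the paper's Jensen step, but applied after peeling off the fastest agent), and an exact one-variable minimization at $s=\sqrt{k}-1$. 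This buys you two things. First, the minimizer identifies the extremal profile --- one agent at full speed plus $k-1$ agents at relative speed $\frac{1}{\sqrt{k}+1}$ --- which is qualitatively the configuration of Theorem~\ref{thm:fence schedule}, so your proof also explains why the lemma is asymptotically tight. Second, your unified argument is more than a stylistic alternative: in the paper's proof, the first case with threshold $\frac{1}{\sqrt{k}}\sum_i v_i$ only yields $L\leq\left(1-\frac{1}{2\sqrt{k}}\right)\sum_i v_i$, which is weaker than the claimed $\left(1-\frac{1}{\sqrt{k}+1}\right)\sum_i v_i$ once $k\geq 2$; the crude bound closes only when $v_{\max}\geq\frac{2}{\sqrt{k}+1}\sum_i v_i$, while the Jensen bound closes only when $v_{\max}\leq\frac{1}{\sqrt{k}}\sum_i v_i$, leaving an intermediate regime covered by neither. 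Your proof has no such gap. Two small points to tidy up: state $k\geq 2$ (for $k=1$ the claim is the trivial efficiency bound $\frac{1}{2}$, and your Cauchy--Schwarz step over $k-1$ terms needs at least one term), and justify that $s=\sqrt{k}-1$ is the \emph{global} minimum of $s\mapsto\frac{(k-1)+s^2}{2(k-1)(1+s)}$ on $[0,\infty)$, e.g.\ by noting that its derivative changes sign from negative to positive exactly at that point.
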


We note that our upper bounds are tight in several interesting special cases. Specifically, for the case of agents having identical speeds, Theorem~\ref{thm:GeneralUpperBoundFence} shows that the efficiency of the schedule (and indeed each agent) is at most $\frac{1}{2}$, reproving the result of \cite{czyzowicz2011boundary}. In contrast to the symmetry argument about non-crossing agents explained above, our arguments and upper bounds easily extend to near-identical speeds as well. Lastly, it is easy to check that Theorem~\ref{thm:GeneralUpperBoundFence} is tight when applied to the configuration of agents used by Dumitrescu et al. \cite{dumitrescu2014fence} and  Kawamura and Soejima \cite{kawamura2015simple} for their construction to obtain efficiency ratios of $25/48-o(1)$ and $2/3-o(1)$, respectively.

Our upper bounds do not exclude schedules with efficiency close to $1$. They do however give important restrictions and clues about what an extremely efficient schedule, if it exists, has to look like. In particular, Lemma~\ref{lem:UpperBoundNumber} implies that any schedule with efficiency $1-\eps$ has to have at least $\left(\frac{1}{2\eps}\right)^2$, i.e., quadratically in $\frac{1}{\eps}$ many agents. In the same manner, Theorem~\ref{thm:GeneralUpperBoundFence} implies that, with $\eps \rightarrow 0$, the ratio between the fastest and slowest agent has to  be at least $\Omega(\frac{1}{\eps})$, i.e. grow unboundedly. Even more interestingly, the way the upper bound in Theorem~\ref{thm:GeneralUpperBoundFence} depends on $\max_i v_i$ seems to indicate that even just a single very fast agent can raise the ``relative efficiency'' of slower agents from $1/2$ to almost $1$.

Equipped with this better understanding and guidance from our impossibility results we were, to our surprise, able to design schedules which achieve an efficiency arbitrarily close to $1$, thus disproving the conjecture of \cite{kawamura2015simple}:

\medskip
\begin{theorem}\label{thm:fence schedule}
For any sufficiently large $k$, there exists a fence patrolling schedule with efficiency $1- \frac{3.5}{\sqrt{k}}$. Such a schedule uses $k-1$ agents of speed one and one agent with maximum speed $\Theta(\sqrt{k})$.
\end{theorem}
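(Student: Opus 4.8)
The plan is to exhibit an explicit periodic schedule and to think about it through a \emph{conveyor-belt} lens. The reason efficiency close to $1$ is not a priori impossible is a simple budget count: at any instant each speed-one agent sweeps fresh fence at rate $1$, so $k$ agents can in principle refresh $k$ units of length per unit of time, exactly matching a fence of length $L \approx k$. An agent achieves its full ``rate $1$'' only while moving monotonically in one direction; all inefficiency comes from \emph{turnarounds}, where an agent retraces ground and contributes no fresh coverage. The governing idea, consistent with Theorem~\ref{thm:GeneralUpperBoundFence}, is therefore to have each speed-one agent march almost monotonically, reversing only once every $\Theta(\sqrt k)$ units of distance, so that each agent wastes an $O(1)$ length per $\Theta(\sqrt k)$ traveled and hence individually attains efficiency $1 - O(1/\sqrt k)$. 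I would set the fence length to $L \approx k$, give the single fast agent speed $v = \beta \sqrt k$ for a constant $\beta$ to be optimized, and work with a schedule of period $\Theta(\sqrt k)$.

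The first main step is to arrange the $k-1$ slow agents into a (near-)unidirectional conveyor with unit spacing, so that their monotone passes alone already guarantee idle time $1$ on the bulk of the fence. The only places where this breaks down are the narrow ``seams'' created at the reversals; the second step is to use the fast agent as a \emph{repair} mechanism that patrols exactly these deficient regions. Here the choice $v = \Theta(\sqrt k)$ is dictated by what the fast agent can accomplish: by going back and forth across an interval of length $\Theta(\sqrt k)$ it completes a round trip in $O(1)$ time, and therefore maintains idle time at most $1$ everywhere inside such a window on its own. The crucial design constraint is thus to synchronize the reversals of the slow agents so that, at every instant, \emph{all} of the deficient seam-points lie inside a single interval of length $O(\sqrt k)$, and so that this interval drifts along the fence at speed at most $v$; the fast agent then simply shadows this moving window.

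To verify correctness I would partition the space-time diagram into a ``conveyor region,'' where idle time $\le 1$ follows directly from the unit-spaced monotone passes of the slow agents, and a ``repair window,'' where idle time $\le 1$ follows from the $O(1)$ round-trip time of the fast agent; the bookkeeping then reduces to checking the two interfaces (the boundary between the two regions, and the two ends of the fence) and confirming that the schedule is genuinely periodic. For the efficiency bound I would sum the per-agent contributions: each of the $k-1$ slow agents contributes $1 - O(1/\sqrt k)$ and the fast agent contributes $\Theta(\sqrt k)$, giving $L \ge k - \Theta(\sqrt k)$, and then divide by $\sum_i v_i = (k-1) + \beta \sqrt k$. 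Optimizing the constant $\beta$ (trading the fast agent's own cost against the number of seams it must repair) is what pins down the precise bound $1 - \tfrac{3.5}{\sqrt k}$.

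I expect the main obstacle to be exactly the localization required in the second step: forcing every turnaround deficit to live in one $O(\sqrt k)$-wide, slowly moving window. A single fast agent can refresh only a length-$\Theta(\sqrt k)$ neighborhood of its current position per unit time, so any schedule in which the deficient points are spread over a $\Theta(k)$-length portion of the fence is hopeless; the construction must instead phase the $k-1$ reversals into one coherent ``reversal front.'' This is precisely the regime where the speed $\sqrt k$ is both necessary (matching Lemma~\ref{lem:UpperBoundNumber}) and sufficient, and getting the front to move smoothly while never leaving any point unvisited for longer than one time unit---especially at the moment the front passes a given location---is the delicate heart of the argument.
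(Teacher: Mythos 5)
Your proposal matches the paper's construction essentially step for step: the paper spaces the $k-1$ unit-speed agents $1-\tfrac{1}{\sqrt{n}}$ apart (with $n=k-2$), has each oscillate over an interval of length $\sqrt{n}$ with starting phases staggered by $\tfrac{1}{2\sqrt{n}}$ per agent so that all turnaround deficits line up along one coherent front occupying an $O(\sqrt{k})$-wide window at every instant, and lets a single agent of speed $2\sqrt{n}-1$ shadow that front, exactly as you describe. The only adjustment to your quantitative picture in the realized schedule is that the deficiency window drifts at speed essentially equal to $v$ (it must traverse the whole fence each period), so the fast agent does not oscillate inside it but rides the front monotonically---each deficit needs only one visit within its unit lifetime---and travels back during deliberately engineered pauses in the reversal phasing, which is what costs the factor of roughly $2$ in its speed and is precisely the ``delicate heart'' you flagged.
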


Note that this theorem implies that for any $\eps > 0$ there exists a fence patrolling schedule with efficiency $1 - \eps$ using $O(\frac{1}{\eps^2})$ agents -- one with speed $\Theta(\frac{1}{\eps})$ and all others with speed $1$. In other words, the efficiency can be made arbitrarily close to $1$ by choosing the appropriate number and maximum speeds of agents.

We remark that Theorem~\ref{thm:fence schedule} also shows that both our upper bounds are asymptotically tight. In particular, the optimal efficiency for any schedule with $k$ agents is indeed $1 - \Theta(\frac{1}{\sqrt{k}})$. Furthermore, for any $s\geq 1$, there is a configuration (with $k=\Theta(s^2)$ agents), where the maximum speeds of the agents differ by a factor $s$ and for which the optimal efficiency is $\frac{1}{1 + \frac{1}{\Theta(s)}} = 1 - \Theta(\frac{1}{s})$.

\subsection{Results for the Circle}

We resolve the conjecture by Kawamura and Soejima affirmatively. Namely, for any large enough $k$, we can construct a patrol schedule with idle time $1$ using agents with maximum speeds $1, 1/2, \dots 1/k$ that patrols a unidirectional circle of length $L=\Theta\left(\frac{\log_2 k}{\log_e \log_2 k}\right)$. In fact, our construction extends to a new universal scheme for the unidirectional circle. This is captured in the following theorem.

\begin{theorem}\label{thm:circle schedule}
For $k$ sufficiently large and for any $k$ agents with maximum speeds $v_1, \dots v_k$ there exists a patrol scheme with idle time $1$ that patrols a unidirectional circle of length
$$L = \frac{1}{33\log_e\log_2 k} \sum_{i=1}^k v_i.$$
\end{theorem}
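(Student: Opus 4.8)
The plan is to give a single universal scheme, so I first reduce an arbitrary speed profile to a clean ``geometric'' one. Rounding each $v_i$ down to the nearest power of two of $v_{\max}$ costs only a factor $2$ in $\sum_i v_i$, which the constant $33$ easily absorbs, so I may assume every speed has the form $2^{-j}v_{\max}$. I group the agents into \emph{speed classes} $C_0, C_1, \dots$, where $C_j$ collects the agents of speed $\approx 2^{-j}v_{\max}$, and I record the total speed $W_j=\sum_{i\in C_j}v_i$ and the count $n_j=|C_j|$; then $\sum_j W_j \ge \frac12\sum_i v_i$. The elementary observation behind the runners strategy $\mathcal A_2$ is that a \emph{single} class, with all its agents equally spaced and moving in unison, patrols a circle of length exactly $W_j$ with idle time $1$. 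Hence no individual class is the difficulty; the difficulty is the \emph{interference} created by forcing classes of different speeds to share one circle. The goal is therefore to combine $\Theta(\log k)$ such unit-efficiency resources while losing only a $\log_e\log_2 k$ factor.

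The key step is to pass to the space--time cylinder $[0,L)\times\R$ (space periodic of length $L$, time vertical) and reinterpret the idle-time constraint. If class $j$ is placed equally spaced, then at every fixed position it contributes a single periodic sequence of crossings whose gap is $g_j=L/W_j$, and the only remaining freedom is a global phase offset $c_j\in[0,1)$. A short computation shows that, normalized by $g_j$, the crossing phase as a function of the normalized position $u\in[0,1)$ is $\theta_j(u)=c_j+n_j\,u \bmod 1$; that is, it \emph{winds around exactly $n_j$ times}, once per agent in the class. In the balanced regime $W_j\approx 1$ all gaps equal $L$, and the requirement that every point be visited with idle time $\le1$ becomes the purely geometric statement that \emph{for every position $u$, the phases $\{\theta_j(u)\}_j$ leave no circular gap larger than $1/L$}. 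Since $t$ phases can be at best $1/t$-dense, this explains the ceiling $L\le t\approx\log k$ and reduces the theorem to exhibiting offsets $c_j$ for which the $t$ winding curves stay $O(\log_e\log_2 k/\log k)$-dense at \emph{every} position simultaneously.

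To produce such offsets I would choose the $c_j$ independently and uniformly at random and argue by a union bound. For a \emph{fixed} position the phases behave like $t$ independent uniform points on the circle, whose maximum gap is $O(\log t/t)$ with high probability; the challenge is to upgrade ``fixed position'' to ``all positions''. As $u$ varies, the cyclic order of the phases changes only where two curves of windings $n_i,n_j$ cross, which happens at $|n_i-n_j|$ positions, for a total event count $\lesssim t\cdot n_{\max}$; between consecutive events the maximum-gap function is piecewise monotone, so it suffices to control the events. The scheme then succeeds provided $e^{\Omega(\gamma t)}$ exceeds the number of events, where $\gamma$ is the density we are aiming for.

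The main obstacle -- and the origin of the $\log_e\log_2 k$ factor -- is precisely this event count. In harmonic speeds the slow classes are the ones forced to contain many agents ($n_j\approx 2^j$ to keep $W_j\approx1$), so their phases wind $\approx k$ times; using the full dyadic range then creates $\approx k\log k$ interference events, which a union bound over only $t\approx\log k$ phase points cannot survive, collapsing the density to $\gamma=\Theta(1)$ and $L$ to a constant. Capping the windings, on the other hand, starves the slow classes of agents and destroys their coverage contribution, so there is a genuine tension between collecting the $\Theta(\log k)$ units of total speed and keeping the interference polynomially bounded. The resolution is to organize the scales into blocks over a geometric base $B$ and to exploit that each class already consists of \emph{many} equally spaced agents, hence supplies many usable phase points rather than one; this inflates the number of random points feeding the union bound while the admitted winding range per block stays controlled. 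Optimizing $B$ trades the number of usable levels $t\approx\log_B k$ against the per-level cost $\sim\log B$ demanded by the union bound, and the balance is struck at $B$ polylogarithmic, yielding $\gamma=\Theta(\log_e\log_2 k/\log k)$ and hence $L=\Theta\!\big(\sum_i v_i/\log_e\log_2 k\big)$. I expect the delicate part of the write-up to be making the all-positions union bound rigorous -- in particular handling the unequal periods $g_j$ that appear once several agents per class are treated as independent phase points, and verifying that the slow, heavily winding classes never destroy the dense coverage already guaranteed by the fast ones.
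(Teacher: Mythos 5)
Your first phase is a faithful reformulation of the right construction: round speeds to powers of two, group into $\Theta(\log k)$ balanced classes, place each class equidistantly with one uniform random offset, and observe that at a fixed position the coverage condition becomes a max-gap condition on $\approx\log_2 k$ independent phases, which holds with gap $\Theta(\log_e\log_2 k/\log_2 k)$ per position. This matches the paper's first phase, and your diagnosis that a naive union bound over the $\approx k$ order-change events collapses the density to a constant is also correct. The gap is in your proposed rescue. The ``many usable phase points per class'' idea adds no independent randomness: all $n_j$ phase points of class $j$ are rigid translates of the \emph{single} uniform variable $c_j$, so the probability that a fixed time window of length $\tau$ is missed is exactly $\prod_j(1-\tau m_j')\approx e^{-\tau V/16}\approx 1/\log_2 k$ at the target idle time, no matter how you re-parametrize periods or organize scales into base-$B$ blocks; with only $\approx \log_2 k$ independent offsets against $\approx k$ essentially distinct position events, no union bound closes. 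Worse, the statement you are trying to prove is false with high probability for this scheme: the expected measure of uncovered points is $\Theta(1/\log_e\log_2 k)>0$ and the expected \emph{number} of uncovered intervals diverges, so bad positions exist whp. Requiring simultaneous coverage of all positions is therefore not just hard to establish by a union bound --- it is the wrong target.

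The missing idea is that one should not insist on covering everything. The paper's second phase tolerates a bad set $B$ of small measure and then \emph{excises} it: show via Fubini and Markov (a per-position expectation computation only, no union bound over positions) that $\mu(B)\leq\epsilon$ with probability $1-o(1)$; write $C_1\setminus B$ as a finite union of closed good intervals; define the piecewise linear collapse map $\varphi:C_1\rightarrow C_{1-\mu(B)}$ with slope $1$ on good intervals and slope $0$ on bad ones; and run the schedule $a_i'(t):=\varphi(a_i(t))$, so that an agent simply stands still at a seam whenever it would be traversing a bad interval. Since $\varphi$ has slope at most $1$, the speed constraints are preserved, and since every point of the shorter circle is the image of a good point, the new schedule has idle time $2\tau$; rescaling gives $L\geq\frac{1}{33\log_e\log_2 k}\sum_i v_i$. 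Note this final schedule is \emph{not} of the constant-speed form your analysis is confined to --- the pauses at seams are essential --- which is precisely why your approach cannot be repaired within its own framework: you would need either this excision step or a fundamentally different construction.
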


\subsection{Organization}

The rest of the paper is organized as follows: We first give a more formal model description of the fence patrolling problem as well as discuss some related models and works in Section~\ref{sec:model}. In Section~\ref{sec:impossiblity} we explain and prove our upper bounds for the line segment. Section~\ref{sec:schedule} explains and gives a proof of our optimal fence patrolling schedule for the line segment. Finally, in Section~\ref{sec:circle_schedule} we present our schedule for a circle with length $\Theta(\frac{1}{\log_{e}\log_{2}{k}} \sum_{i=1}^k v_i)$ and proof that it indeed has idle time 1. Formal and complete proofs for the two schedules can be found in the appendix.

\section{Fence Patrolling and Related Models} \label{sec:model}

In this section we give a more detailed formal definition for the fence patrolling model/problem and briefly discuss related models and results.
The fence patrolling model as given by \cite{czyzowicz2011boundary} is defined as follows:

\begin{itemize}
\item The \emph{environment} $\mathcal{E}$ to be patrolled is $1$-dimensional and consists of a line segment of length $L$ or a circle of circumference $L$. This line segment or circle is also referred to as a \emph{fence}.
\item  The fence patrolling problem consists of some finite number $k \in \mathbb{N}$ of mobile agents $a_1, a_2, \ldots, a_k$ to patrol the fence, each having a possibly distinct positive maximum speed $v_1, v_2, \ldots, v_k \in \mathbb{R}_{+}$. 
\item A \emph{schedule} for the fence patrolling problem consists of a $k$-tuple of functions $a_1, a_2, \dots, a_k:[0, \infty)\rightarrow\mathcal{E}$ such that, for all $i\in[k]$, $t\geq 0$ and $\epsilon>0$,
$$\dist(a_i(t+\epsilon), a_i(t)) \leq \epsilon\cdot v_i.$$ That is, we assume patrolling starts at $t=0$ and goes on indefinitely. Each agent follows a predetermined trajectory, in which he/she moves along $\mathcal{E}$ with at most his/her maximum speed. In the case of a circular fence, the function $\dist(x,y)$ refers to the length of the shorter circle arc between $x,y \in \mathcal{E}$. In the case of the unidirectional circle, we have the additional requirement that $\forall i \in [k], t \geq 0$ and $0<\epsilon<\frac{L}{2v_i}$, the shorter arc between $a_i(t)$ and $a_i(t+\epsilon)$ is the one that spans clockwise from $a_i(t)$.
\item We say that a patrol schedule has \emph{idle time $T$} for some fixed positive parameter $T$ if for all $t\geq T$ and for all $x\in\mathcal{E}$, there is some agent that visits $x$ during $[t-T, t]$. Intuitively, this condition means that an intruder cannot remain undetected at a point for more than $T$ time.

\item  Given a patrol schedule, we say that a point $(x,t) \in \mathcal{E} \times [T,\infty)$ is \emph{$T$-covered} if some agent $a_i$ visits the point $x \in \mathcal{E}$ on the fence during the time interval $[t-T,t]$. Note that in this model an agent patrols/monitors a point  $x \in \mathcal{E}$ by visiting it. On the one hand, this means the agents are limited to zero line of sight. On the other hand, no additional operation (e.g. stop and look around) is necessary to patrol a point.
\end{itemize}

One can see that a schedule has idle time $T$ if and only if every point $(x,t) \in \mathcal{E} \times [T,\infty)$ is $T$-covered. It is easy to observe that any patrol schedule of a fence of length $L$ with idle time $T$ can be rescaled to a schedule of a fence of length $\alpha\cdot L$ with idle time $\frac{1}{\alpha}\cdot T$ for any $\alpha >0$. Thus, to simplify terminology, we assume henceforth that $T=1$ and we refer to $1$-covered simply as covered.

Related models have been considered in the literature: where agents have positive line of sight \cite{czyzowicz2014visibility}, where agents have distinct walking and patrolling speeds \cite{czyzowicz2017twospeed}, where some agents may be faulty \cite{czyzowicz2017faulty}, where only some regions of the environment need to be patrolled \cite{collins2013}, or where the environment is a geometric tree \cite{czyzowicz2017trees}. However, all of these models feature identical agents and in particular do not allow for varying maximum speeds. Overall, the model given above is likely the cleanest and most natural model in which agents with different speeds can and have been studied. Despite the extreme simplicity of this model, this paper and prior works on the fence patrolling problem~\cite{czyzowicz2011boundary,kawamura2015fence,dumitrescu2014fence,kawamura2015simple} show that very surprising and intricate phenomena occur when agents have different speeds and that these nontrivial consequences can be studied in the model defined above.

\section{Impossibility Results for the Line Segment: Proof of Theorem \ref{thm:GeneralUpperBoundFence} and Lemma \ref{lem:UpperBoundNumber}}\label{sec:impossiblity}

In this section, we prove two upper bounds on the length of a straight line fence (i.e. $\mathcal{E} = [0,L]$) patrolled by agents of maximum speeds $v_1, \dots, v_k$. 

\begin{proof}[\bf Proof of Theorem \ref{thm:GeneralUpperBoundFence}]
The main idea of the proof is to consider the two-dimensional spacetime continuum $\mathcal{S}:=[0,L] \times [0,\infty)$ and the trajectories of the agents along with the points they cover as geometric objects in it. To prove an upper bound on $L$, we add the agents one by one in a carefully chosen order. Whenever we add an agent, some additional points are covered by him/her and we then can examine what happens to the ``right border'' of what has been covered so far as we add more agents (see Figure~\ref{fig:fence impossibility border} below) to derive our results. 

\begin{figure}[!ht]
\begin{center}
\begin{tikzpicture}[scale=0.3]
\def \n {4}

\draw[fill=black] (0, 2*\n+0.6) rectangle (4*\n, 2*\n+1.1);
\node[above] at (2*\n, 2*\n+1.1) {fence};

\draw[black!20] (-0.5, 2*\n) -- (4*\n+0.5, 2*\n);
\node[left] at (-0.5, 2*\n) {0};

\draw[ultra thick, black!50, ->] (0, 2*\n) -- (0, -2*\n); 
\draw[ultra thick, black!50, ->] (4*\n, 2*\n) -- (4*\n, -2*\n); 
\node[left] at (0, -2*\n+0.5) {time};

\draw[fill=black!20, black!20] (3.2*\n, 2*\n) --(3.2*\n, 2*\n-1) -- (0.8*\n,-2*\n -1) -- (0.8*\n,-2*\n ) -- (3.2*\n, 2*\n);

\draw[fill=black!20, black!20] (2/3*\n,2*\n) -- (2/3*\n,2*\n-1) -- (0, 1.5*\n-1) -- (0, 1.5*\n) -- (2/3*\n,2*\n);
\draw[fill=black!20, black!20] (0, 1.5*\n) -- (0, 1.5*\n-1) -- (2.5*\n,-3/8*\n-1) -- (2.5*\n,-3/8*\n) --(0, 1.5*\n);
\draw[fill=black!20, black!20] (2.5*\n,-3/8*\n) -- (2.5*\n,-3/8*\n-1) -- (2*\n, -3/4*\n-1) -- (2*\n, -3/4*\n) -- (2.5*\n,-3/8*\n);
\draw[fill=black!20, black!20] (2*\n, -3/4*\n) -- (2*\n, -3/4*\n-1) --(11/3*\n, -2*\n-1) -- (11/3*\n, -2*\n) --(2*\n, -3/4*\n);

\draw[fill=black!20, black!20] (2*\n-2,2*\n) -- (2*\n-2,2*\n-1)-- (4*\n, \n-2) --  (4*\n, \n-1) -- (2*\n-2,2*\n); 
\draw[fill=black!20, black!20] (4*\n, 1*\n-1) -- (4*\n, 1*\n-2) -- (0,-\n-2)-- (0,-\n-1) -- (4*\n, \n-1) ;
\draw[fill=black!20, black!20]  (0,-\n-1) -- (0,-\n-2) --  (2*\n-2,-2*\n-1)  -- (2*\n-2,-2*\n) -- (0,-\n-1) ;

\draw[ultra thick, red] (3.2*\n, 2*\n) -- (0.8*\n,-2*\n);

\draw[ultra thick, blue]  (2/3*\n,2*\n)  -- (0, 1.5*\n) -- (2.5*\n,-3/8*\n) -- (2*\n, -3/4*\n)
 --(11/3*\n, -2*\n);

\draw[ultra thick, green] (2*\n-2,2*\n) -- (4*\n, \n-1) -- (0,-\n-1) -- (2*\n-2,-2*\n);

\draw[dashed, ultra thick, yellow] (3.2*\n, 2*\n-1) -- (2*\n+12/29,-9/29) --(2.5*\n, -3/8*\n) --  (2.5*\n, -3/8*\n-1) -- (2*\n+2/3, -3/4*\n -0.5) -- (11/3*\n, -2*\n);

\draw[dashed, ultra thick] (3.2*\n, 1.97*\n-1) -- (2.94*\n, 1.52*\n-1) -- (4*\n, 1*\n-1) -- (4*\n, 1*\n-2) -- (2.5*\n - 0.4+0.2, -3/8*\n + 0.3) --(2.5*\n+0.1,-3/8*\n) -- (2.5*\n+0.1, -3/8*\n-1) -- (2*\n+2/3+0.2, -3/4*\n -0.5) -- (11/3*\n+0.2, -2*\n);

\end{tikzpicture}
\caption{If we have added the blue agent and the red agent so far, then the right border is shown in yellow. If we now also add the green agent, the new right border is shown in black.}
\label{fig:fence impossibility border}
\end{center}
\end{figure}
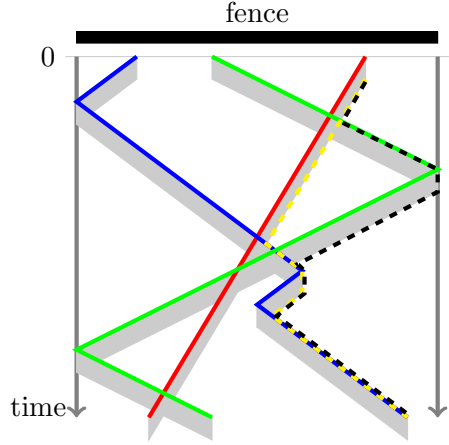

Since, as noted in Section \ref{sec:model}, a patrol schedule with agents $a_1, \dots, a_k$ has idle time $1$ if and only if $\forall x \in [0,L], \forall t \geq 1,$ $(x,t) \in \mathcal{S}$ is covered by at least one agent $a_j$, the theorem can be equivalently stated as that, for any patrol schedule such that all points $(x, t)\in S$ with $t\geq 1$ are covered by some agent, we have
\begin{equation} 
L\leq \sum_{i=1}^{k} \frac{1}{\frac{1}{v_i}+\frac{1}{v_{\max}}}. \label{eq::1} 
\end{equation}
In fact, we will show \eqref{eq::1} under the weaker assumption that only points $(x, t)\in S$ with $t\in [1, 2k]$ are covered by some agent.

Given a patrol schedule of $[0,L]$ with agents $a_1, \dots, a_k$ and a non-empty subset $A\subseteq \{a_1,\dots,a_k\}$, we define the \emph{right border} of $A$ as the function $B^A:[1, \infty)\rightarrow[0, L]$ given by 
$$B^A(t) := \max\{x \in [0,L] : (x,t)\textrm{ is covered by some agent in }A\}.$$
We show \eqref{eq::1} by considering, for some $1 \leq q \leq k$ (as we might not need to consider all agents), the collections of agents $A_1, \dots, A_q$, where $A_1 = \{a_{i_1}\}$, and for all $j \in [q]\setminus\{1\}, A_j = A_{j-1}\cup \{a_{i_j}\}$, $i_1, \dots, i_q$ is a sequence of distinct integers in $[k]$ to be specified later.  The intuition behind this is that we are starting with an empty set and adding more agents in a specific order until some termination condition is met. It is clear that for all $t \geq 0$ and for all  $j \in [q-1]$ we have $B^{A_j}(t)\leq B^{A_{j+1}}(t)$. The key idea of the proof is to consider what happens to the right border of $A_j$ as $j$ increases (that is, as more agents are added). An example of the right borders of $A_j$ and $A_{j+1}$ for some $j$ is shown in Figure \ref{fig:fence impossibility border}.

At this point we prove a claim which will be useful in specifying the sequence $i_1, \dots, i_q$.
\begin{claim}
\label{existenceOfSingleAgent}
For any patrol schedule of $[0,L]$ with set of agents $A = \{a_1, \dots, a_k\}$ and idle time $1$, for any subset $A'$ of $\{a_1, \dots, a_k\}$, and for any point $(p_x, p_t) \in \mathcal{S}$ on the right border of $A'$ (that is, such that $B^{A'}(p_t) = p_x$), there exists an $\varepsilon > 0$ such that there is at least one agent $a_i \in A \setminus A'$ that covers all points $(p_x+\nu, p_t)$ for $\nu \in [0,\varepsilon]$.
\end{claim}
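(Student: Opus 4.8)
The plan is to exploit the continuity of each agent's trajectory together with the idle-time-$1$ hypothesis, reducing the statement to an elementary pigeonhole over the finitely many agents in $A\setminus A'$. The first step is to record the shape of what a single agent covers at a fixed time. By the defining bound $\dist(a_i(t+h),a_i(t))\le h\,v_i$, each trajectory $a_i$ is $v_i$-Lipschitz, hence continuous, so the set of fence positions covered by $a_i$ at time $p_t$, namely $\{a_i(\tau):\tau\in[p_t-1,p_t]\}$, is the continuous image of a compact interval and therefore a compact connected subset of $[0,L]$, i.e.\ a single closed interval $[m_i,M_i]$ (possibly degenerate). So I will henceforth think of each agent as covering, at time $p_t$, one closed subinterval of the fence.

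Next I would combine the definition of the right border with idle time $1$. By definition, $p_x=B^{A'}(p_t)$ is the rightmost fence point covered by any agent of $A'$ at time $p_t$, so no agent of $A'$ covers any point of $(p_x,L]$. On the other hand, since $p_t\ge 1$, idle time $1$ guarantees that every point $(x,p_t)$ with $x\in[0,L]$ is covered by some agent of $A$. Combining these, every point of $(p_x,L]$ must be covered by an agent of $A\setminus A'$, so that $(p_x,L]\subseteq U:=\bigcup_{a_i\in A\setminus A'}[m_i,M_i]$, a finite union of closed intervals.

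Finally I would upgrade this covering of $(p_x,L]$ into a single interval straddling $p_x$. Note that the conclusion, requiring an interval $[p_x,p_x+\varepsilon]$ of positive length inside the fence, presupposes $p_x<L$, which I therefore assume. Arguing by contradiction, suppose no agent $a_i\in A\setminus A'$ satisfies $m_i\le p_x<M_i$. Then every interval $[m_i,M_i]$ that meets $(p_x,L]$ (equivalently, has $M_i>p_x$) must have $m_i>p_x$; letting $\mu$ be the minimum of these finitely many left endpoints, we get $\mu>p_x$, and the nonempty interval $(p_x,\min(\mu,L))$ is then disjoint from $U$ and uncovered by $A'$, contradicting idle time $1$. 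Hence some $a_i\in A\setminus A'$ satisfies $m_i\le p_x<M_i$; taking $\varepsilon=M_i-p_x>0$ gives $[p_x,p_x+\varepsilon]\subseteq[m_i,M_i]$, so this single agent covers every point $(p_x+\nu,p_t)$ with $\nu\in[0,\varepsilon]$, as claimed.

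The argument is elementary once the first reduction is in place, and I expect the pigeonhole step to be routine. The only genuinely delicate points are justifying that a single agent's instantaneous coverage is a closed interval (which is exactly where continuity and connectedness enter) and handling the boundary $p_x=L$ correctly; the main conceptual content is the clean interplay between the right-border definition (which excludes $A'$ from $(p_x,L]$) and idle time $1$ (which forces that region to be covered), leaving the agents of $A\setminus A'$ to do all the work.
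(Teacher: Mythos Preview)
Your proof is correct and follows essentially the same route as the paper: both arguments use continuity of trajectories to reduce each agent's time-$p_t$ coverage to a closed interval, invoke the right-border definition to exclude $A'$ from $(p_x,L]$, then derive a contradiction via the minimum left endpoint of the finitely many intervals from $A\setminus A'$ meeting $(p_x,L]$. Your write-up is in fact a bit cleaner than the paper's, explicitly isolating the condition $m_i\le p_x<M_i$ and flagging the boundary case $p_x=L$.
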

\begin{proof}
Note that there could not exist three points $(x_1, t), (x_2, t), (x_3, t)$ such that $0 \leq x_1 < x_2 < x_3 \leq L$ and some agent $a_j$ covers $(x_1, t)$ and $(x_3, t)$ but not $(x_2, t)$. This is because the trajectory of every agent can be considered as a continuous function $f_{a_j}:[0,\infty) \rightarrow [0,L]$, so it cannot be that $\exists t_1,t_3 \in [t-1,t]$ such that $f(t_1) = x_1$ and $f(t_3) = x_3$ but $\not\exists t_2 \in [t-1,t]$ with $f(t_2) = x_2$. It follows that the set of points $C_j$ on the segment between $(p_x, p_t)$ and $(L, p_t)$ covered by some agent $a_j$ must be either the empty set or a segment. Now consider the set of agents $A_p$ in $A \setminus A'$ that cover $(p_x, p_t)$ and note that $\exists \varepsilon > 0$ such that $\exists a \in A_p$ that covers $(p_x + \varepsilon, p_t)$ (otherwise choose the non-empty $C_j$ with $a_j \in A \setminus A'$ with the leftmost left end $(p_{l},p_t)$ that is strictly to the right of $p_x$ and notice that all points $(p_{x'},p_t)$ with $p_{x'} \in (p_x, p_l)$ are not covered by any agent in $A$, which is impossible as all points in $\mathcal{S}$ should be covered). But now $a \in A_p$ covers both $(p_x, p_t)$ and $(p_x + \varepsilon, p_t)$, therefore it also covers anything in between since the points it covers between $(p_x,p_t)$ and $(L, p_t)$ must form a segment.
\end{proof}

Now we can continue with the proof of our main upper bound by specifying $i_1, \dots, i_q$. Consider a fixed patrol schedule of $[0,L]$ with agents $a_1, \dots, a_k$ and idle time $1$. To pick the sequence $i_1, \dots, i_q$, consider the following procedure: initially, put $l_0=(x_0, t_0) = (0, k)$ and pick $i_1\in[k]$ such that agent $a_{i_1}$ covers $l_0$. For each consecutive $j=1, 2, \dots$, we let $l_{j} = (x_{j},t_{j})$ be such that
$$t_{j}=\arg\min_{t \in [k-j, k+j]} B^{A_{j}}(t)$$
and $x_{j} = B^{A_{j}}(t_{j})$. Intuitively, $l_{j}$ is the leftmost point of $B^{A_{j}}$ between times $k-j$ and $k+j$. Now if $x_j = L$, we stop adding agents and we set $q:= j$. Note that if $j=k$, then $x_j = L$ as agents $a_1, \dots, a_k$ cover all of $[1,2k]$. If $x_j < L$, pick an agent $a_{i_{j+1}}$ that covers all the points with coordinates $(x_{j}+\nu, t_{j})$ for $\nu \in [0,\varepsilon]$ for some small enough $\varepsilon > 0$. Such an agent should exist by Claim \ref{existenceOfSingleAgent}. To make sure $l_j$ is always defined for any $j \in \{0,1,\dots,q\},$ if $q=k$, set $l_k := (L, k)$.

We note that $x_0, x_1, ..., x_q$ is non-decreasing, $x_0=0$ and $x_q=L$ since we either stopped adding agents when $q < k$ because $x_q=L$ or we stopped when $q=k$, in which case all of $[0,L] \times [1,2k]$ should be covered. Hence the theorem follows if we can show that
$$x_{j+1} - x_{j} \leq \frac{1}{\frac{1}{v_{i_{j+1}}} + \frac{1}{v_{\max}}}$$
is true for all $ j \in \{0,1,\dots,q-1\}$.

In order to bound this difference, we investigate how the right border moves when agent $a_{i_{j+1}}$ is added. Note that $\forall t \in [0,\infty),$ $$B^{A_{j+1}}(t) = \max(B^{A_j}(t), B^{\{a_{j+1}\}}(t)).$$ For any time $t > t_{j}$, the rightmost point at time $t$ that could be covered by any agent in $A_{j}$ is $(x_{j} + (t - t_{j})v_{\max},t)$ since the speed of any agent is at most $v_{\max}$. Similarly, for any time $t < t_{j}$, the rightmost point at time $t$ that could be covered is $(x_{j} + (t_{j} - t)v_{\max},t)$. Thus, $\forall t \in [0,\infty),$ $$B^{A_{j}}(t) \leq x_{j}+\lvert t-t_{j}\rvert v_{\max}.$$Denote by $u$ the ray $(x_{j} + (t_{j} - t)v_{\max},t)$ where $t \leq t_{j}$, and by $w$ the ray $(x_{j} + (t - t_{j})v_{\max},t)$ where $t \geq t_{j}$.

Next, consider $B^{\{a_{j+1}\}}(t)$. Since agent $a_{j+1}$ covers $l_{j}$, this means that he/she visits $x_j$ at some time between $t_{j}-1$ and $t_j$, say at point $(x_{j}, t_{visit})$. Under the restriction that the trajectory of agent $a_{i_{j+1}}$ should go through $(x_j, t_{visit})$, it is clear that $B^{\{a_{i_{j+1}}\}}(t)$ is maximized if agent $a_{i_{j+1}}$ comes from the right at maximum speed, hits $(x_{j}, t_{visit})$ and then turns around and moves to the right at maximum speed, in which case equality is achieved in $$\forall t \in [0,\infty), B^{\{a_{i_{j+1}}\}}(t) \leq \min(x' + \lvert t - t' \rvert v_{i_{j+1}},L),$$where $(x',t') = (x_{j} + \frac{v_{i_{j+1}}}{2}, t_{visit} + \frac{1}{2})$. Denote by $h$ the ray $(x' + (t' - t)v_{i_{j+1}},t)$ where $t \leq t'$ and by $g$ the ray $(x' + (t - t')v_{i_{j+1}},t)$ where $t \geq t'$.

\begin{figure}[!ht]
\begin{center}
\begin{tikzpicture}[scale=0.8]
\def \n {3}
\foreach \x in {0,2*\n,4*\n} {

\draw[fill=black!20, black!20] (\x+\n, \n) -- (\x+\n, \n -2) -- (\x +1, -1) --
(\x+\n, -\n) -- (\x+\n, -\n-2) --(\x,-2) --(\x,0) --(\x+\n, \n);

\draw[ultra thick] (\x+\n, \n) -- (\x,0) -- (\x+\n, -\n);
}

\draw[ultra thick, red] (0, -0.5) -- (\n,0);
\draw[ultra thick, green] (0, -0.5) -- (\n,-1);

\node[left] at (0,0) {$N := L'$};
\draw[fill] (0,0) circle (0.1cm);

\node[left] at (0,-0.5) {$L_{\text{old}}$};
\draw[fill] (0,-0.5) circle (0.1cm);

\node[above right] at (1.28,-0.72) {$L_{\text{new}}$};
\draw[fill] (1.28,-0.72) circle (0.1cm);

\node[right] at (1.28,1.28) {$M$};
\draw[fill] (1.28,1.28) circle (0.1cm);
\draw[dashed, ultra thick] (1.28,-0.72) -- (1.28,1.28);

\draw[ultra thick, red] (2*\n, -1.5) -- (3*\n,-1);
\draw[ultra thick, green] (2*\n, -1.5) -- (3*\n,-2);

\node[left] at (2*\n, -1.5) {$L_{\text{old}}$};
\draw[fill] (2*\n, -1.5) circle (0.1cm);

\node[above right] at (2*\n+1.25,-1.25) {$L_{\text{new}}$};
\draw[fill] (2*\n+1.25,-1.28) circle (0.1cm);

\node[right] at (2*\n+1.25,-3.28) {$M$};
\draw[fill] (2*\n+1.25,-3.28) circle (0.1cm);
\draw[dashed, ultra thick] (2*\n+1.25,-1.28) -- (2*\n+1.25,-3.28);

\node[left] at (2*\n,0) {$L'$};
\draw[fill] (2*\n,0) circle (0.1cm);

\node[left] at (2*\n,-2) {$N$};
\draw[fill] (2*\n,-2) circle (0.1cm);

\draw[ultra thick, red] (4*\n, -1.5) -- (5*\n, 1.2);
\draw[ultra thick, green] (4*\n, -1.5) -- (5*\n,-4.2);

\node[left] at (4*\n, -1.5) {$L_{\text{old}}$};
\draw[fill] (4*\n, -1.5) circle (0.1cm);

\node[right] at (4*\n+1,-1) {$L_{\text{new}}$};
\draw[fill] (4*\n+1,-1) circle (0.1cm);

\node[left] at (4*\n,0) {$N:= L'$};
\draw[fill] (4*\n,0) circle (0.1cm);

\node[left] at (4*\n+1,1) {$M$};
\draw[fill] (4*\n+1,1) circle (0.1cm);
\draw[dashed, ultra thick] (4*\n+1,1) -- (4*\n+1,-1);

\end{tikzpicture}
\caption{The rays $R_1$ and $R_2$ give a bound to the right of the current right border and are given in red and green respectively, and in black and gray we have a bound to the right of the trajectory of the newly added agent $a_i$ and its shadow, i.e. the points that are covered by it.}
\label{fig:fence impossibility cases}
\end{center}
\end{figure}
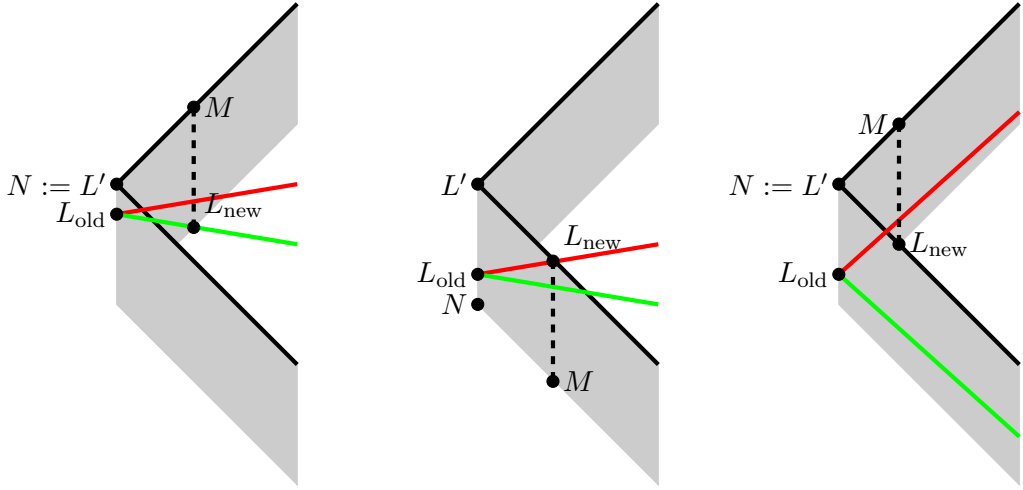 

We thus get $\forall t \in [0,\infty),$ $$B^{A_{j+1}}(t) = \max(  B^{A_{j}}(t), B^{\{a_{i_{j+1}}\}}(t) ) \leq f(t),$$ where $f(t) = \max(x_{j}+\lvert t-t_{j} \rvert v_{\max}, x' + \vert t-t' \rvert v_{i_{j+1}})$. Consider $t_{new} = \arg\min_{t \in [1,\infty)} f(t)$. Let $L_{new} = (x_{new} := f(t_{new}),t_{new})$ be the leftmost point on the aforementioned upper bound on $B^{A_{j+1}}(t)$. Notice that $L_{new}$ is either the intersection of $h$ and $w$, or the intersection of $g$ and $u$, or the intersection of $g$ and $h$. These three cases are illustrated in Figure \ref{fig:fence impossibility cases}. We have $u$ in red and $w$ in green. Consider the upper bound on $B^{\{a_{i_{j+1}}\}}(t)$ mentioned above. The trajectory of agent $a_{i_{j+1}}$ that would correspond to matching this upper bound is given in black and the points $a_{i_{j+1}}$ would cover if this was his/her trajectory are given in gray. We have that $L_{old} = (x_{old}, t_{old}) := l_{j}$. It can be seen by inspection of the three cases in Figure \ref{fig:fence impossibility cases} that $|t_{new} - t_{j}| \leq 1$. Then $t_{new} \in [k-(j+1),k+(j+1)]$, which makes $L_{new}$ a candidate for $l_{j+1}$, therefore $l_{j+1} = (x_{j+1}, t_{j+1})$ will have $x_{j+1} \leq x_{new}$. Thus it is enough to show that $x_{new} - x_j \leq \frac{1}{\frac{1}{v_{i_{j+1}}} + \frac{1}{v_{\max}}}$.

We need an upper bound on $d = x_{new} - x_{old}$. We consider the points $M = (x_M, t_M)$ and $N = (x_N, t_N)$ as illustrated in Figure \ref{fig:fence impossibility cases}, such that in all three cases $x_M = x_{new}$ and $\lvert t_M - t_{new} \rvert=1$. In Case 1, we consider the segment $MN$ of slope $\frac{1}{v_{i_{j+1}}}$ and $x_M - x_N = d$, and the segment $L_{old}L_{new}$ of $w$ of slope $-\frac{1}{v_{\max}}$ and $x_{new} - x_{old} = d$. This gives us
\begin{equation}
\label{eq::2}
\frac{d}{v_{i_{j+1}}} + \frac{d}{v_{\max}} \leq 1 \Rightarrow d \leq \frac{1}{ \frac{1}{v_{i_{j+1}}} + \frac{1}{v_{\max}} }.
\end{equation}
In Case 2, we consider the segment $L_{new}L_{old}$ of $u$ of slope $\frac{1}{v_{\max}}$ and $x_{new} - x_{old} = d$, and the segment $NM$ of slope $-\frac{1}{v_{i_{j+1}}}$ and $x_{M} - x_{N} = d$. This implies Equation \eqref{eq::2} for Case 2 as well. In Case 3, we consider the segment $MN$ of slope $\frac{1}{v_{i_{j+1}}}$ and $x_M - x_N = d$, and the segment $NL_{new}$ of slope $-\frac{1}{v_{i_{j+1}}}$ and $x_{new} - x_{N} = d$. This means that $$\frac{2d}{v_{i_{j+1}}} = 1 \Rightarrow d = \frac{v_{i_{j+1}}}{2} \leq \frac{1}{ \frac{1}{v_{i_{j+1}}} + \frac{1}{v_{\max}} }.$$Therefore, $x_{new} - x_{old} \leq \frac{1}{\frac{1}{v_{i_{j+1}}} + \frac{1}{v_{\max}}}$ as desired.
\end{proof}

\begin{proof}[\bf Proof of Lemma \ref{lem:UpperBoundNumber}]
We show how $L\leq \left(1-\frac{1}{\sqrt{k}+1}\right)\sum_{i=1}^k v_i$ follows from Theorem~\ref{thm:GeneralUpperBoundFence}. First note that $L \leq \sum_{i=1}^k v_i - \frac{v_{\max}}{2}$ as each agent $a_i$ contributes at most $v_i \cdot \frac{1}{1 + \frac{v_i}{v_{\max}}} \leq v_i$ while the agent with maximum speed contributes exactly $\frac{v_{\max}}{2}$. Therefore, if $v_{\max} \geq \frac{1}{\sqrt{k}} \sum_{i=1}^k v_i$ the desired upper bound for $L$ follows immediately. It remains to deal with the case $v_{\max} < \frac{1}{\sqrt{k}} \sum_{i=1}^k v_i$. For this we first note that $x\cdot \frac{1}{1 + \frac{x}{v_{\max}}} = \frac{1}{\frac{1}{x} + \frac{1}{v_{\max}}}$ is a concave function in $x$ for $0 \leq x \leq v_{\max}$, since the second derivative $-\frac{2}{(1+\frac{x}{v_{\max}})^3v_{\max}}$ is always negative. This allows us to apply Jensen's inequality and thus we have
\begin{align*}
L \leq \sum_{i=1}^k \frac{1}{ \frac{1}{v_i} + \frac{1}{v_{\max}} }
 \leq k \frac{1}{\frac{1}{v_{avg}} + \frac{1}{v_{\max}} } = \frac{1}{1 + \frac{\sum_{i=1}^k v_i}{k \cdot v_{\max}}} \sum_{i=1}^k v_i \leq \left(1 - \frac{1}{\sqrt{k}+1}\right) \sum_{i=1}^k v_i,
\end{align*}
where $v_{avg} = \frac{1}{k} \sum_{i=1}^k v_i$. This concludes the proof of Lemma \ref{lem:UpperBoundNumber}.
\end{proof}

\section{A Schedule with Efficiency $1 - \eps$ for the Line Segment: Proof of Theorem \ref{thm:fence schedule}}\label{sec:schedule}
In this section, we prove that for any $k$ agents, there exist speeds $v_1, \dots, v_k$ and a scheme for these agents to patrol a fence of length 
$$L = \left(1- \frac{3.5}{\sqrt{k}} + O(1/k) \right) \sum_{i=1}^k v_i.$$
This improves the result from \cite{kawamura2015simple} and therewith falsifies the corresponding conjecture stated in that paper.

\begin{proof}[Proof of Theorem \ref{thm:fence schedule}]
Assume $k$ is sufficiently large, and, for ease of notation, define $n:=k-2$. Let $L = n - 3/2\sqrt{n}$. We construct a schedule that patrols $\mathcal{E}=[0, L]$ with idle time 1, using $n+1$ agents with maximum speed $1$ and $1$ agent with maximum speed $2\sqrt{n} -1$. Thus we have a total speed of $V = \sum_{i=1}^{k} v_i = n+2\sqrt{n}$. As the ratio between $L$ and $V$ approaches $1- \frac{3.5}{\sqrt{k}} + O(1/k)$, Theorem \ref{thm:fence schedule} follows.

To simplify presentation of the patrol schedule, we will allow agents to occasionally  ``step out of the fence $[0, L]$'', i.e. we allow an agent $a_i$ to assume positions $a_i(t)<0$ and $a_i(t)>L$ (to avoid this, we could also modify the schedule so that they stay at the respective end of the fence for a while). To keep the notation as clean as possible, we henceforth assume that $n$ is a square number. Our schedule works as follows (see figure Figure \ref{fig:schedule} for a graphical representation):
\bigskip

\noindent \textsc{Slow agents $a_1, \ldots, a_n$:} 
For each $i \in \{0, \ldots, n\}$, agent $a_i$ starts at time $0$ at position $x=  i - i / \sqrt{n}$ and moves $i / (2\sqrt{n})$ time units to the right. Then he or she repeats:
\begin{itemize}
\item move to the left for $\sqrt{n}$ time units.
\item move to the right for $\sqrt{n}$ time units.
\end{itemize}

\noindent \textsc{Fast agent $a_{n+1}$:}
The fast agent $a_{n+1}$ starts at time $0$ and repeats the following four steps:
\begin{enumerate}[(1)]
\item Move from position $0$ to position $L+1/2$ with speed $2\sqrt{n} -1$.
\item Move from position $L+1/2$ to position $-1/2$ during the next $\sqrt{n}/2 +1$ time units (e.g. with constant speed $(L+1)/(\sqrt{n}/2 +1) = 2\sqrt{n} -7+  16/(\sqrt{n}+2)$).
\item Move from position $-1/2$ to position $L$ with speed $2\sqrt{n} -1$.
\item Move from position $L$ to position $0$ in the next $\sqrt{n}/2$ time units (e.g. with constant speed $L/(\sqrt{n}/2) = 2\sqrt{n} -3$).
\end{enumerate}

The idea behind our patrol schedule is to initially place the agents with maximum speed $1$ equidistantly along the fence with gaps of length slightly smaller than $1$, similar to the schedule for the fast agents in \cite{kawamura2015simple}. In contrast to their schedule, this is performed slightly out of phase between the agents. This will cover most of the points on the fence. The only problem appears whenever the agents turn around, as then the points right next to these turning points are not visited for more than 1 time unit, hence creating uncovered triangles in the ``spacetime'' diagram (white triangles in Figure \ref{fig:schedule}). By timing the turning times of the agents appropriately, we ensure that these uncovered triangles are placed such that they can all be cleaned up by the last fast agent. This will be described in further detail in the paragraphs that follow. Figure~\ref{fig:schedule} gives a complete illustration of our schedule. 

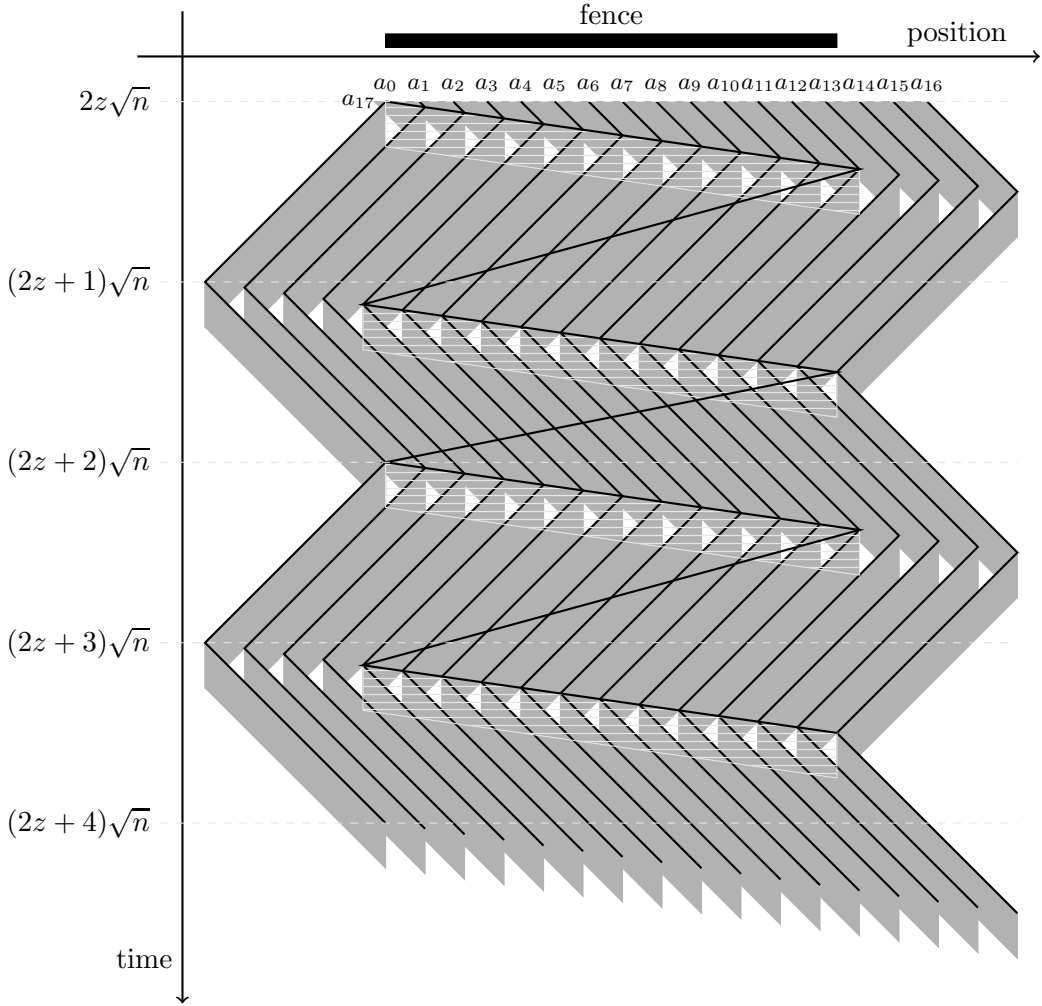
\begin{figure}[!ht]
\begin{center}
\begin{tikzpicture}[scale=0.6]

\def \n {16}
\def \sqrtn {4}

\foreach \x in {1,...,\n}{
\draw[fill=black!30,black!30]  (\x -\x/\sqrtn ,0) -- (\x - \x*0.5/\sqrtn , - \x*0.5/\sqrtn)
-- (\x - \x*0.5/\sqrtn , - \x*0.5/\sqrtn-1) -- (\x -\x/\sqrtn -1, 0) --(\x -\x/\sqrtn ,0); 
} 

\foreach \x in {0,...,\n}{
\draw[fill=black!30,black!30]  (\x - \x*0.5/\sqrtn , - \x*0.5/\sqrtn) --
(\x - \x*0.5/\sqrtn -\sqrtn, - \x*0.5/\sqrtn -\sqrtn) -- 
(\x - \x*0.5/\sqrtn -\sqrtn, - \x*0.5/\sqrtn -\sqrtn -1) --
(\x - \x*0.5/\sqrtn , - \x*0.5/\sqrtn-1) --
(\x - \x*0.5/\sqrtn , - \x*0.5/\sqrtn); 
}

\foreach \x in {0,...,\n}{
\draw[fill=black!30,black!30]  (\x - \x*0.5/\sqrtn -\sqrtn, - \x*0.5/\sqrtn -\sqrtn) --
(\x - \x*0.5/\sqrtn, - \x*0.5/\sqrtn -2*\sqrtn) --
(\x - \x*0.5/\sqrtn, - \x*0.5/\sqrtn -2*\sqrtn -1) --
(\x - \x*0.5/\sqrtn -\sqrtn, - \x*0.5/\sqrtn -\sqrtn-1) --
(\x - \x*0.5/\sqrtn -\sqrtn, - \x*0.5/\sqrtn -\sqrtn);
}

\foreach \x in {0,...,\n}{
\draw[fill=black!30,black!30]  (\x - \x*0.5/\sqrtn , - \x*0.5/\sqrtn -2*\sqrtn) --
(\x - \x*0.5/\sqrtn -\sqrtn, - \x*0.5/\sqrtn -3*\sqrtn) -- 
(\x - \x*0.5/\sqrtn -\sqrtn, - \x*0.5/\sqrtn -3*\sqrtn -1) --
(\x - \x*0.5/\sqrtn , - \x*0.5/\sqrtn - 2*\sqrtn -1) --
(\x - \x*0.5/\sqrtn , - \x*0.5/\sqrtn - 2*\sqrtn); 
}

\foreach \x in {0,...,\n}{
\draw[fill=black!30,black!30]  (\x - \x*0.5/\sqrtn -\sqrtn, - \x*0.5/\sqrtn -3*\sqrtn) --
(\x - \x*0.5/\sqrtn, - \x*0.5/\sqrtn -4*\sqrtn) --
(\x - \x*0.5/\sqrtn, - \x*0.5/\sqrtn -4*\sqrtn -1) --
(\x - \x*0.5/\sqrtn -\sqrtn, - \x*0.5/\sqrtn - 3*\sqrtn-1) --
(\x - \x*0.5/\sqrtn -\sqrtn, - \x*0.5/\sqrtn -3*\sqrtn);
} 


\foreach \x in {0,...,\n}{
\draw[thick]  (\x -\x/\sqrtn ,0) -- (\x - \x*0.5/\sqrtn , - \x*0.5/\sqrtn) --
(\x - \x*0.5/\sqrtn -\sqrtn, - \x*0.5/\sqrtn -\sqrtn) --
(\x - \x*0.5/\sqrtn, - \x*0.5/\sqrtn -2*\sqrtn) --
(\x - \x*0.5/\sqrtn -\sqrtn, - \x*0.5/\sqrtn -3*\sqrtn) --
(\x - \x*0.5/\sqrtn, - \x*0.5/\sqrtn -4*\sqrtn); 
} 

\draw[gray!20, schraffiert=gray!20]  (0,0) -- 
(\n - 1.5*\sqrtn +0.5, - 0.5*\sqrtn +0.5) -- 
(\n - 1.5*\sqrtn +0.5, - 0.5*\sqrtn -0.5) --
(0,-1) -- (0,0);

\draw[gray!20, schraffiert=gray!20]  (-0.5, - \sqrtn - 0.5) --  
( \n- 1.5*\sqrtn, - 1.5*\sqrtn) -- ( \n- 1.5*\sqrtn, - 1.5*\sqrtn -1) -- 
(-0.5, - \sqrtn - 1.5) --  (-0.5, - \sqrtn - 0.5);

 \draw[gray!20, schraffiert=gray!20]  (0,-2*\sqrtn) -- 
(\n - 1.5*\sqrtn +0.5, - 2.5*\sqrtn +0.5) -- 
(\n - 1.5*\sqrtn +0.5, - 2.5*\sqrtn -0.5) --
(0, -2*\sqrtn-1) -- (0, -2*\sqrtn);

\draw[gray!20, schraffiert=gray!20]  (-0.5, - 3*\sqrtn - 0.5) --  
( \n- 1.5*\sqrtn, - 3.5*\sqrtn) -- ( \n- 1.5*\sqrtn, - 3.5*\sqrtn -1) -- 
(-0.5, - 3*\sqrtn - 1.5) --  (-0.5, - 3*\sqrtn - 0.5);

\draw[thick]  (0,0) -- (\n - 1.5*\sqrtn +0.5, - 0.5*\sqrtn +0.5) --
 (-0.5, - \sqrtn - 0.5) --  ( \n- 1.5*\sqrtn, - 1.5*\sqrtn) -- (0, -2*\sqrtn) --
 (\n - 1.5*\sqrtn +0.5, - 2.5*\sqrtn +0.5) --
 (-0.5, - 3*\sqrtn - 0.5) --  ( \n- 1.5*\sqrtn, - 3.5*\sqrtn);

\draw[thick, ->] (-\sqrtn-1.5, 1) -- (\n-1.5,1);
\node[above left] at (\n-2,1) {position};

\draw[thick, ->] (-\sqrtn-0.5, 2) -- (-\sqrtn-0.5, -5*\sqrtn);
\node[left] at (-\sqrtn-0.5, -5*\sqrtn+1) {time};

\draw[fill=black] (0, 1.2) rectangle (\n -1.5*\sqrtn, 1.5);
\node[above] at (\n/2 - 0.75*\sqrtn, 1.5) {fence};

\draw[black!10, dashed] (-\sqrtn -1, 0) -- (\n -2, 0);
\node[left] at (-\sqrtn -1, 0) {$2z\sqrt{n}$};

\draw[black!10, dashed] (-\sqrtn -1, -\sqrtn) -- (\n -2, -\sqrtn);
\node[left] at (-\sqrtn -1, -\sqrtn) {$(2z +1)\sqrt{n}$};

\draw[black!10, dashed] (-\sqrtn -1, -2*\sqrtn) -- (\n -2, -2*\sqrtn);
\node[left] at (-\sqrtn -1, -2*\sqrtn) {$(2z+2)\sqrt{n}$};

\draw[black!10, dashed] (-\sqrtn -1, -3*\sqrtn) -- (\n -2, -3*\sqrtn);
\node[left] at (-\sqrtn -1, -3*\sqrtn) {$(2z +3)\sqrt{n}$};

\draw[black!10, dashed] (-\sqrtn -1, -4*\sqrtn) -- (\n -2, -4*\sqrtn);
\node[left] at (-\sqrtn -1, -4*\sqrtn) {$(2z+4)\sqrt{n}$};

\foreach \x in {0,...,\n}  {
\node[above] at (\x -\x/\sqrtn ,0) {\footnotesize $a_{\x}$};
}

\node[left] at (0 ,0) {\footnotesize $a_{17}$};


\end{tikzpicture}
\caption{The described schedule for $n=16$. The dark grey area describes the points $(x,t)$ which are covered by the slow agents $a_0, \ldots, a_n$ while the light grey shaded area describes the points $(x,t)$ which are covered by the fast agent in steps (1) and (3) of his protocol.} 
\label{fig:schedule}
\end{center}
\end{figure}

We will show that the above schedule indeed has idle time 1. As mentioned in Section \ref{sec:model}, this is equivalent to showing that every $(x,t)\in [0,L] \times [1,\infty)$ is covered.

Observe that in our schedule all agents have periodicity $2\sqrt{n}$ (agent $a_{n+1}$ walks $\sqrt{n}/2 -1/2$ time units in steps (1) and (3), $\sqrt{n}/2+1$ time units in step (2) and $\sqrt{n}/2$ time units in step (4)). Thus, any times $t$ below should be interpreted as $t \mod 2\sqrt{n}$ (for clarity of representation we will usually omit the  $\mod 2\sqrt{n}$ term and hope that this will not cause confusion). Moreover, due to this $2\sqrt{n}$-periodicity, we could easily extend our patrol schedule to a schedule for all times $t\in (-\infty, \infty)$, i.e. we need not worry about times $t$ being smaller than 1 in our arguments below.

Denote by $L^i = \left(x^i_L, t^i_L\right)$ and $R^i = \left(x^i_R, t^i_R\right)$ the left respectively right turning point of agent $a_i$, $\forall i \in \{0, \ldots, n\}$, i.e. agent $a_i$ walks along the fence from $x_L^i$ to $x_R^i$ and back, turning around at times $t^i_L$ and $t_R^i$ respectively. It follows directly from the protocol of the slow agents that 
\begin{align*}
R^i &= \left(i \left(1 - \frac{1}{2\sqrt{n}}\right), \frac{i}{2\sqrt{n}}  \right) \quad \text{ and} \\
L^i &= \left(i \left(1 - \frac{1}{2\sqrt{n}}\right) - \sqrt{n}, \frac{i}{2\sqrt{n}} + \sqrt{n} \right) 
\end{align*}

Let $x$ be a fixed point along the fence. We will now argue that the point $x$ gets visited at least every 1 time unit. To do this, we will write down a $2\sqrt{n}$-periodic sequence of \emph{visiting times}, (i.e. times $x$ gets visited by an agent $a_i$) such that any two neighbouring visiting times differ by at most 1 time unit.  

First, we note that agent $a_i$ visits $x$ if $i(1-1/(2\sqrt{n})) - \sqrt{n} \leq x \leq i(1-1/(2\sqrt{n}))$ or equivalently if 
$$\frac{2\sqrt{n}x}{2\sqrt{n} -1} \leq i \leq \frac{2\sqrt{n}x+2n}{2\sqrt{n} -1}.$$
Denote by $j_\text{min} := \left\lceil \frac{2\sqrt{n}x}{2\sqrt{n} -1} \right\rceil \geq \left\lceil \frac{2\sqrt{n}\cdot 0}{2\sqrt{n} -1} \right\rceil = 0$ and by $j_\text{max} := \left\lfloor \frac{2\sqrt{n}x+2n}{2\sqrt{n} -1} \right\rfloor \leq  \left\lfloor \frac{2\sqrt{n}L+2n}{2\sqrt{n} -1} \right\rfloor = n$ the indices such that agents $a_{j_{\text{min}}}, \ldots, a_{j_{\text{max}}}$ are exactly the agents visiting the point $x$.

For each $i \in \{j_{\text{min}}, \ldots, j_{\text{max}}\}$ the agent $a_i$ visits $x$ at times 
$$ s_i =  \frac{i}{2\sqrt{n}}  - \left(i\left(1 - \frac{1}{2\sqrt{n}}\right) - x \right) = x - i\left(1 - \frac{1}{\sqrt{n}}\right) $$
walking from left to right and
$$ t_i =  \frac{i}{2\sqrt{n}}  + \left(i\left(1 - \frac{1}{2\sqrt{n}}\right) - x \right) = i-x$$
walking from right to left.
Furthermore, we observe that the fast agent $a_{n+1}$ visits $x$ in steps (1) and (3) of his protocol at times
\begin{align*}
 f_1 &=  \frac{x}{2\sqrt{n}-1}   \quad \text{and} \\
 f_3 &= \frac{x+1/2}{2\sqrt{n}-1} + \sqrt{n} + \frac{1}{2} = \frac{x + 2n}{2\sqrt{n}-1}.
\end{align*}
We claim that 
$$s_{j_{\text{max}}}, \ldots, s_{j_{\text{min}}}, f_{1}, t_{j_{\text{min}}}, \ldots, t_{j_{\text{max}}}, f_3$$
is a sequence of visiting times of the point $x$ with all adjacent visiting times differing in at most 1 time unit. It is obvious that the differences $s_{i-1} - s_i$ and $t_i - t_{i-1}$ are not greater than 1 for all $i = j_{\text{min}} +1, \ldots j_{\text{max}}$. Thus it remains to check whether this is also true for the remaining four gaps between $s_{j_{\text{min}}}$ and $f_{1}$,  $f_{1}$ and $t_{j_{\text{min}}}$, $t_{j_{\text{max}}}$ and $f_3$ and $f_3$ and $s_{j_{\text{max}}} + 2\sqrt{n}$.

As $j_{\text{min}} \leq \frac{2\sqrt{n}x}{2\sqrt{n} -1} +1$ we have
\begin{align*}
f_1 - s_{j_{\text{min}}} &= \frac{x}{2\sqrt{n}-1} - \left(x - j_{\text{min}}\left(1 - \frac{1}{\sqrt{n}}\right)\right) \\
&\leq  \frac{x}{2\sqrt{n}-1} -x  + \left( \frac{2\sqrt{n}x}{2\sqrt{n}-1} +1 \right)\left(1 - \frac{1}{\sqrt{n}}\right) \\
&= \frac{x}{2\sqrt{n}-1} - \frac{2\sqrt{n}x -x}{2\sqrt{n}-1} + \frac{2\sqrt{n}x}{2\sqrt{n}-1} - \frac{2x}{2\sqrt{n}-1} + 1 - \frac{1}{\sqrt{n}} = 1 - \frac{1}{\sqrt{n}}
\end{align*}
and 
\begin{align*}
t_{j_{\text{min}}} -f_1 &=  j_{\text{min}} - x - \frac{x}{2\sqrt{n}-1}\\
&\leq \frac{2\sqrt{n}x}{2\sqrt{n}-1} +1 -x - \frac{x}{2\sqrt{n}-1} \\
&= \frac{2\sqrt{n}x}{2\sqrt{n}-1} - \frac{2\sqrt{n}x -x}{2\sqrt{n}-1} - \frac{x}{2\sqrt{n}-1} +1 = 1
\end{align*}

Likewise, as $j_{\text{max}} \geq \frac{2\sqrt{n}x +2n}{2\sqrt{n}-1} -1$, we have
\begin{align*}
f_3 - t_{j_{\text{max}}} &= \frac{x + 2n}{2\sqrt{n}-1} - \left(j_{\text{max}} - x\right) \\
&\leq  \frac{x + 2n}{2\sqrt{n}-1} -  \frac{2\sqrt{n}x +2n}{2\sqrt{n}-1} + 1 + x \\
&=  \frac{x + 2n}{2\sqrt{n}-1} -  \frac{2\sqrt{n}x +2n}{2\sqrt{n}-1}   + \frac{2\sqrt{n}x-x}{2\sqrt{n}-1} + 1 = 1
\end{align*}
and 
\begin{align*}
s_{j_{\text{max}}} + 2\sqrt{n} - f_3 &=  x - j_{\text{max}}\left(1 - \frac{1}{\sqrt{n}}\right) + 2\sqrt{n} -  \frac{x + 2n}{2\sqrt{n}-1}   \\
&\leq   x - \left(\frac{2\sqrt{n}x +2n}{2\sqrt{n}-1} -1\right) \left(1 - \frac{1}{\sqrt{n}}\right) + 2\sqrt{n} -  \frac{x + 2n}{2\sqrt{n}-1} \\
&=   \frac{2\sqrt{n}x-x}{2\sqrt{n}-1} - \frac{2\sqrt{n}x +2n}{2\sqrt{n}-1} + \frac{2x +2\sqrt{n}}{2\sqrt{n}-1} + 1 - \frac{1}{\sqrt{n}}  +\frac{4n - 2\sqrt{n}}{2\sqrt{n}-1}  -  \frac{x + 2n}{2\sqrt{n}-1} \\
&= 1- \frac{1}{\sqrt{n}}
\end{align*}
Thus, $s_{j_{\text{max}}}, \ldots, s_{j_{\text{min}}}, f_{1}, t_{j_{\text{min}}}, \ldots, t_{j_{\text{max}}}, f_3$
is a sequence of visiting times of the point $x$ with all adjacent visiting times differing in at most 1 time unit, concluding the proof. 
\end{proof}

\section{A schedule for the unidirectonal circle: Proof of Theorem \ref{thm:circle schedule}} \label{sec:circle_schedule}
In this section, we will present a schedule with which a group of agents $a_1, \ldots, a_k$ with maximum speeds $v_1, \ldots, v_k$ can patrol a circle with circumference 
$$ \frac{1}{33\log_e\log_2(k)} \sum_{i=1}^k v_i,$$
followed by a proof of why the proposed schedule behaves as claimed. The construction of our schedule has two steps: we first divide the agents into $\Theta(\log_2 k)$ groups, reducing the speed of some and discarding others so that each group consists of a power of $2$ number of agents that move with the same speed, which is also a power of $2$ times the sum of speeds. This allows us to use a randomized construction, in which the agents from each group are placed equidistantly around the circle with a random offset from some fixed "beginning" of the circle, and move around it with the same speed. We show that with this patrol schedule, most points are visited as frequently as required by our theorem. Then as a second phase, we cut out the bad points -- that is, the ones that are not visited as frequently as necessary. We move the patrol schedule to a smaller circle, intuitively only consisting of the good bits. Agents move as if they were on the larger circle, but whenever moving though a cut-out segment, they just stand still instead. 

Below, we denote by $C_L$ a circle of circumference $L$, formally interpreted as the quotient $\mathbb{R}/L\cdot \mathbb{Z}$.

\begin{proof}[Proof of Theorem \ref{thm:circle schedule}]
We are given a group $a_1, \ldots, a_k$ of agents with maximum speeds $v_1 \ldots, v_k$ and define $V :=\sum_{i=1}^k v_i$. We start by considering the following schedule on $C_1$:
\medskip

\noindent \textbf{Circle Schedule:}
\begin{enumerate}[(1)]
\item Round speeds down to the next power of 2 and omit too slow agents: \\
For all $i \in [k]$ let $j_i \in \mathbb{N}$ be the non-negative integer such that $V\cdot 2^{-j_i} \leq v_i < V\cdot2^{-j_i +1}$. We define $v'_i := V \cdot 2^{-j_i}$ and $I := \left\lbrace i \in [k] : v'_i \geq \frac{V}{4\cdot k} \right\rbrace$.
\item Group remaining agents according to their speed:\\
For all $i \in \{0, \ldots, \lceil \log_2 (k) \rceil + 2 \}$ define $G_i := \left\lbrace j \in I : v'_j =  V\cdot2^{-i} \right\rbrace$.
\item Reduce number of agents in each group to a power of 2:\\
For all $i \in  \{0, \ldots, \lceil \log_2 (k) \rceil + 2 \}$ let $h_i \in \mathbb{N}$ be the positive integer such that $2^{h_i} \leq |G_i| < 2^{h_i+1}$ and let $G'_i \subseteq G_i$ be an arbitrary subset of $G_i$ of size $2^{h_i}$. We denote by $m'_i = |G'_i|\cdot v'_{a} =  V\cdot2^{h_i -i}$ the \emph{mass of the group $G'_i$}, where $a \in G'_i$ (that is, each agent in $G'_i$ has maximum speed $v'_{a}$ after the rounding down in step 1).
\item Omit groups with too small mass:\\
Let $J := \left\lbrace i \in  \{0, \ldots, \lceil \log_2 (k) \rceil + 2  \}: m'_i \geq \frac{V}{16 (\lceil \log_2(k) \rceil+3)} \right\rbrace$.
\item Patrol schedule:
For each $j \in J$, pick an independent uniform random number $r_j$ in the interval $[0,\frac{1}{|G'_j|})$. At time 0, we place the $|G'_j|$ agents from the group $G'_j$ at positions 
$$ r_j, r_j +\frac{1}{|G'_j|}, \ldots, r_j + \frac{|G'_j| -1}{|G'_j|},$$
i.e. we place all agents from the same group equidistantly from each other along the circle with a random offset from the origin. Then the agents $G'_j$  walk along the circle with speed $v'_j$ at all times.
\end{enumerate}

We observe that the two rounding steps and the two omitting steps each reduced the total ``available'' speed by a factor of at most 2, therefore we have that $\sum_{j\in J} m'_j \geq V/16$. Next, define 
$$T = \frac{1}{\min_{j\in J} m'_j} \leq 16(\lceil \log_2(k)\rceil + 3) / V$$
and note that after $T$ time units the distribution of agents along the circle repeats itself, i.e. if an agent with speed $s$ is located at position $x$ at time $t$, then another (or the same) agent of speed $s$ is located at position $x$ at time $t + T$. Additionally, we define $m := \lfloor T_2V/16\log_e\log_2(k)\rfloor$ and $\tau := T_2/m$, where $T_2$ be the largest integer multiple of $T$ which is smaller than $16(\lceil \log_2(k)\rceil + 3) / V$.

We say that a point $x \in C_1$ is \emph{bad} if there exists an $i \in \{0, \ldots, m-1\}$ such that the point $x$ does not get visited in the time interval $[i\tau, (i+1)\tau]$ by any agent. Furthermore, we denote by $B$ the (random) set of bad points and observe that any ``good'' point $x \in C_1 \setminus B$ gets visited at least every $2\tau$ time units. Our goal is to find a suitable upper bound on the expected size of this set $B$ of bad points. We therefore start by calculating the probability that a given point $x \in C_1$ is bad. 

By the union bound we have that $ \Pr[x \in B] \leq  m\cdot \Pr[x \text{ does not get visited in } [0,\tau]]$. As we choose a uniform random offset for each group $G'_j$, the group $G'_j$ covers a uniformly random $\tau\cdot m'_j$ fraction of $C_1$ in the first $\tau$ time units. Thus, we have that  
\begin{align*}
 \Pr[x \in B] &\leq  m\cdot \Pr[x \text{ does not get visited in } [0,\tau]] \\
 &= m\cdot \prod_{j \in J}(1- \tau \cdot m'_j) \\
 &\leq m \cdot e^{-\tau \sum_{j\in J} m'_j} \\
 &\leq m \cdot e^{-\tau \cdot V/16} \\
 &= m\cdot e^{-T_2V/16m}\\
 &\leq \frac{\left\lceil \log_2(k) \right\rceil + 3}{\log_e\log_2(k)} \cdot \frac{1}{\log_2(k)} \leq \frac{2}{\log_e \log_2(k)}.
\end{align*}

Next, we use this to prove the following claim:
\begin{claim}\label{claim:circle 1}
Let $\epsilon >0$ be a fixed constant, and consider the schedule on $C_1$ as above. With probability $1-o(1)$, the set of bad points cover at most an $\epsilon$-fraction of the circle.
\end{claim}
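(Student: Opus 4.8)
The plan is to upgrade the pointwise bound $\Pr[x \in B] \le \frac{2}{\log_e\log_2(k)}$ just established into control of the total (random) measure $|B|$ via a standard first-moment argument followed by Markov's inequality. The key observation is that $\frac{2}{\log_e\log_2(k)} \to 0$ as $k \to \infty$, so the expected fraction of bad points is itself $o(1)$, and concentration is not even needed---only the expectation and Markov suffice for a $1-o(1)$ statement with a fixed threshold $\epsilon$.

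First I would compute the expected measure of the bad set by interchanging expectation and integration over the circle. Since the indicator $\mathds{1}[x \in B]$ is nonnegative and jointly measurable in the point $x$ and the random offsets $(r_j)_{j \in J}$, Tonelli's theorem gives
$$\mathbb{E}[|B|] = \mathbb{E}\left[\int_{C_1} \mathds{1}[x \in B]\,\rd x\right] = \int_{C_1} \Pr[x \in B]\,\rd x \le \frac{2}{\log_e\log_2(k)},$$
where the last inequality uses the pointwise estimate together with the fact that $C_1$ has unit circumference.

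Next I would apply Markov's inequality to the nonnegative random variable $|B|$:
$$\Pr[|B| > \epsilon] \le \frac{\mathbb{E}[|B|]}{\epsilon} \le \frac{2}{\epsilon\,\log_e\log_2(k)}.$$
Because $\epsilon$ is a fixed constant and $\log_e\log_2(k) \to \infty$ as $k \to \infty$, the right-hand side tends to $0$; hence $\Pr[|B| \le \epsilon] \ge 1 - o(1)$, which is exactly the claim (recalling that $|B|$ equals the fraction of the circle that is bad, as $C_1$ has circumference $1$).

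The only step that requires any care is the Tonelli interchange, which needs joint measurability of the event $\{(x,\omega) : x \in B(\omega)\}$. This is unproblematic here: for each fixed realization of the offsets, each group $G'_j$ covers a finite union of arcs in each subinterval $[i\tau,(i+1)\tau]$, so $B$ is the complement of a finite union of arcs whose endpoints depend piecewise-linearly (hence measurably) on the offsets $(r_j)_{j\in J}$. I therefore expect no real obstacle in this claim---the work was already done in deriving the pointwise bound $\Pr[x \in B] \le \frac{2}{\log_e\log_2(k)}$, and this claim is essentially its first-moment corollary.
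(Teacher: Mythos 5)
Your proof is correct and follows essentially the same route as the paper's: compute $\mathbb{E}[\mu(B)]$ by interchanging expectation and integration (the paper cites Fubini, you cite Tonelli) using the pointwise bound $\Pr[x\in B]\leq \frac{2}{\log_e\log_2(k)}$, then apply Markov's inequality. Your extra remark on joint measurability is a careful touch the paper omits, but it does not change the argument.
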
 
\begin{proof}
Let $X := \mu(B)$, where $\mu(B)$ is the Lebesgue-measure of the set of bad points $B$. We claim that $\mathbb{E}[X] \leq 2/(\log_e \log_2(k))$. Indeed, by Fubini's Theorem it holds that
\begin{align*}
\mathbb{E}[X] &= \mathbb{E}\left[\int_{0}^{1} \mathds{1}_{\{x \in B\}} \; dx \right] \\
&= \int_{0}^{1} \mathbb{E}\left[\mathds{1}_{\{x \in B\}} \right] dx \\
&= 1\cdot \Pr[x \in B] \leq \frac{2}{\log_e \log_2(k)}
\end{align*}
Hence, by Markov's inequality, 
$$\Pr[X \geq \epsilon] \leq \frac{2}{\epsilon \log_e \log_2(k)} = o(1). $$
Therefore, we have, with probability $1-o(1)$, that a $1-\epsilon$ fraction of the points on the circle $C_1$ get visited at least every $2\tau$ 
 time units, as desired.
\end{proof}

In summary, for any $\epsilon>0$ and  any $k\geq k_0(\epsilon)$, there exists a choice of random offsets in the above strategy such that all but an at most $\epsilon$-fraction of ``bad'' points on $C_1$ is visited every $2\tau$ time steps.

Next, we show that any such patrol schedule can be transformed into a schedule $a'_1(t), \dots, a'_k(t)$ with idle time $2\tau$ on a circle with circumference $1-\mu(B)$. Intuitively, this done by ``cutting out'' intervals of bad points from $C_1$ and gluing together the remaining segments to form a circle of length $1-\mu(B)$. The patrol schedule on $C_{1-\mu(B)}$ mimics the schedule on $C_1$, except that whenever an agent would move along a bad interval on $C_1$, it should instead stand still at the corresponding seam in $C_{1-\mu(B)}$.

Let us formalize the process of cutting out bad points. For any $t_1<t_2$, let $a_i([t_1, t_2])$ denote the closed interval of points on $C_1$ visited by agent $i$ during the time interval $[t_1, t_2]$. We can write the set of good points as
$$ C_1 \setminus B = \bigcap_{j = 0}^{m-1} \bigcup_{i =1}^{k} a_i([j\tau, (j+1)\tau]).$$
Thus, by iterative use of the distributive law for sets, it follows that the set $C_1\setminus B$ of good points can be written as a finite union of closed intervals. Hence, we have a partitioning of $C_1$ into, alternately, good closed intervals and open bad intervals.

Given such a partitioning, we can construct a piecewise linear map $\varphi:C_1\rightarrow C_{1-\mu(B)}$ such that $\varphi(x)$ is constant on any bad interval, and linearly increasing with slope $1$ on any good interval, and construct a patrol schedule $a'_1(t), \dots, a'_k(t)$ on $C_{1-\mu(B)}$ by letting $a'_i(t) := \varphi(a_i(t))$.

As $\varphi$ is non-decreasing with slope at most $1$, it is clear that this is a feasible patrol schedule. Moreover, for any $y\in C_{1-\mu(B)}$, the pre-image $\varphi^{-1}(\{y\})$ is either one good point on $C_1$, or the closure of a bad interval. In either case there exists a good point $x\in C_1\setminus B$ such that $\varphi(x)=y$, and as the patrol schedule on $C_{1-\mu(B)}$ visits $\varphi(x)$ at all times when $x$ gets visited on $C_1$, it follows that the patrol schedule on $C_{1-\mu(B)}$ has idle time $2\tau$, as desired.

Rescaling the patrol schedule $a'_1(t), \ldots, a'_k(t)$, $t \in [0, \infty)$ then gives us the desired patrol schedule with idle time $1$ on a circle $C_L$ of length 
$$L:=(1-\mu(B))/(2\tau) \geq \frac{(1-\epsilon)V}{(32+o(1)) \log_e\log_2(k)} \geq \frac{1}{33\log_e\log_2(k)} \sum_{i=1}^k v_i, $$
concluding the proof.
\end{proof}

\bibliography{refs}

\end{document}